\def\d{\mathrm{d}}
\def\laweq{\buildrel \mathrm{d} \over =}
\newcommand{\X}{\mathcal {X}}
\newcommand{\VaR}{\mathrm{VaR}}
\newcommand{\ES}{\mathrm{ES}}
\newcommand{\ex}{\mathrm{ex}}
\newcommand{\DQex}{\mathrm{DQ}^\mathrm{ex}_\alpha}
\newcommand{\DQVaR}{\mathrm{DQ}^\mathrm{VaR}_\alpha}
\newcommand{\DQES}{\mathrm{DQ}^\mathrm{ES}_\alpha}
\newcommand{\E}{\mathbb{E}}
\newcommand{\R}{\mathbb{R}}
\newcommand{\p}{\mathbb{P}}
\newcommand{\id}{\mathds{1}}
\renewcommand{\ge}{\geqslant}
\renewcommand{\le}{\leqslant}
\renewcommand{\geq}{\geqslant}
\renewcommand{\leq}{\leqslant}
\renewcommand{\epsilon}{\varepsilon}
\DeclareMathOperator*{\argmax}{arg\,max}
\DeclareMathOperator*{\argmin}{arg\,min}
\newcommand{\esssup}{\mathrm{ess\mbox{-}sup}}
\theoremstyle{plain}
\newtheorem{theorem}{Theorem}
\newtheorem{proposition}{Proposition}
\theoremstyle{definition}
\newtheorem{definition}{Definition}
\newtheorem{example}{Example}
\theoremstyle{remark}
\newtheorem{remark}{Remark}
\newcommand{\cet}{\begin{center}}
\newcommand{\ecet}{\end{center}}
\begin{document}

\title{Diversification quotient based on expectiles}

\author{
Xia Han\thanks{School of Mathematical Sciences and LPMC, Nankai University, China.
			\texttt{xiahan@nankai.edu.cn}}
\and Liyuan Lin\thanks{Department of Econometrics and Business Statistics, Monash University,  Australia. \texttt{liyuan.lin@monash.edu}}
\and 	Hao Wang\thanks{School of Mathematical Sciences, Nankai University, China. \texttt{hao.wang@mail.nankai.edu.cn}}	\and
Ruodu Wang\thanks{Department of Statistics and Actuarial Science, University of Waterloo, Canada.   \texttt{wang@uwaterloo.ca}}
}
\date{\today}
\maketitle

\begin{abstract}  

A diversification quotient (DQ) quantifies diversification in stochastic portfolio models based on a family of risk measures.
We study DQ  based on expectiles, offering a useful alternative to conventional risk measures such as Value-at-Risk (VaR) and Expected Shortfall (ES). The expectile-based DQ admits simple formulas and has a natural connection to the Omega ratio.
Moreover, the expectile-based DQ is not affected by small-sample issues faced by VaR-based or ES-based DQ due to the scarcity of tail data.
 The expectile-based  DQ  exhibits pseudo-convexity in portfolio weights, allowing  gradient descent algorithms for portfolio selection. We show that the corresponding optimization problem can be efficiently solved using linear programming techniques in real-data applications. Explicit formulas for  DQ based on expectiles are also derived for  elliptical and multivariate regularly varying  distribution models.  Our findings enhance the understanding of the DQ's role in financial risk management and highlight its potential to improve portfolio construction strategies.

\textbf{Keywords}: Diversification quotient;   expectiles; Omega ratio; portfolio selection;  pseudo-convexity. 
\end{abstract}

\section{Introduction}

Various approaches have been developed in the literature to quantify the degree of diversification, which is useful in portfolio selection. Classic diversification indices, such as the diversification ratio (DR) and the diversification benefit (DB), evaluate changes in capital requirements caused by pooling assets. We refer the reader to \cite{T07}, \cite{CC08}, \cite{EWW15}, \cite{EFK09}, and \cite{MFE15} for works related to DR and DB based on Value-at-Risk (VaR), Expected Shortfall (ES), or other commonly used risk measures.

Recently, \cite{HLW24} proposed a new diversification index called the diversification quotient (DQ), which quantifies the improvement in the risk-level parameter by the diversification effect. 
DQ can be characterized through six axioms: non-negativity, location invariance, scale invariance, rationality, normalization, and continuity. This characterization distinguishes DQ from other indices such as DR and DB, which do not have an axiomatic foundation. In financial risk management, VaR and ES are widely used, particularly within regulatory frameworks such as Basel III/IV and Solvency II.
DQs based on VaR and ES has many appealing features, including distinguishing heavy tails and common shocks, convenient formulas for computation, and efficiency in portfolio optimization, as shown in \cite{HLW23,HLW24}. Furthermore, they also study additional properties of DQs based on VaR and ES under popular models in risk management.

  


Besides VaR and ES, expectiles are arguably the third most important class of risk measures in the recent literature. 
Expectiles were introduced by \cite{NP87} as an asymmetric generalization of the mean in regression analysis.
\cite{BKMR14} studied expectiles in the context of risk measures, and showed that they are coherent risk measures (\cite{ADEH99}). 
An important feature of expectiles is that they are    elicitable. Elicitability is the property of a statistical functional to be defined as the minimizer of a suitable expected loss, a feature that is useful in statistical forecast and machine learning; see  \cite{G11} and \cite{FK21}. \cite{Z16} uniquely identified expectiles as the only coherent risk measures that are also elicitable. For further developments on expectiles,  see \cite{BD17}    for using expectiles in risk management,  \cite{MY15} and \cite{HCM24} for  asymptotic properties of risk concentration, \cite{CW16} and \cite{XLMZ23} for  their applications in insurance, and  \cite{BMWW24} for an axiomatization of expectiles in decision theory, with a connection the disappointment theory of \cite{G91}.

Because expectiles are an important   class of risk measures and DQ can be constructed based on any monotonic parametric class of risk measures, this paper conducts a systemic study on DQ based on expectiles, and in particular on its financial implications. This will provide a larger toolbox for risk management beyond DQs based on VaR and ES.

The first question we need to address is what DQ based on expectiles offers, different from DQs based on VaR and ES. 
After all,  DQs  based on VaR and ES enjoy many useful features, 
including explicit formulas, natural interpretations of dependence, and efficient optimization methods (\cite{HLW23,HLW24}). Indeed, DQ based on expectiles also shares most of these nice features. 
But more importantly, there are additional advantages of DQ based on expectiles that are not shared by DQ based on VaR or ES, which we explain below. 
\begin{enumerate}[(i)]
\item 
DQ based on expectiles connects to the well-established financial concept of the Omega ratio (see e.g.,  \cite{KS02}, \cite{KZCR14a,KZCR14b}, \cite{GMOS16} and \cite{SM17}), where the reference threshold in the Omega ratio is chosen as the sum of the expectiles of the individual risks (Theorem \ref{th:var}).
There are, however, two main differences: First, the Omega ratio does not reflect diversification, and it only evaluates based on the total risk; in contrast, the DQ is based on a balance between the total risk and the marginal risks, thus reflecting diversification.
Second, when an Omega ratio is applied in portfolio selection, the optimized portfolio is highly sensitive to the exogenously chosen threshold for the Omega ratio, but for DQ, the implicit threshold is endogenously computed from the expectiles. 
Compared to DQs based on VaR and ES, which focus solely on the tail risk, typically representing losses, DQ based on expectiles balances gains and losses.
An illustration of this advantage of DQ based on expectiles is provided in Example \ref{ex:stable} in Section \ref{sec:3}. 
    Figure \ref{sec_40_Omega} and Table \ref{tab_40_Omega} compare the performance of portfolios constructed from maximizing  Omega ratios with different thresholds and minimizing DQ based on expectile.
 As shown in Figure \ref{sec_40_Omega} and Table \ref{tab_40_Omega}, volatility varies significantly across different choices of thresholds. The strategy based on 
$\mathrm{DQ}_{0.1}^{\mathrm{ex}}$
  achieves the highest annual return and lowest annual volatility among all evaluated strategies, demonstrating that a strong diversification empirically has good return and volatility. For the case of 
$\alpha=0.05$, the DQ strategy performs also reasonably well.


    \begin{figure}[t]
\caption{Wealth processes for portfolios   maximizing  Omgea ratios with thresholds 0, $t_0$, $1.5t_0$, and $2 t_0$, where $t_0$ is the monthly return for the equally weighted portfolio that  minimizing DQ based on expectiles ($\DQex$) with parameter $\alpha =0.05$ and $\alpha =0.1$. The dataset has 40 stocks from 2014 to 2023, described in Section \ref{sec:62}. Portfolios are rebalanced each month.}\label{sec_40_Omega} 
\centering
           \includegraphics[height=5cm]{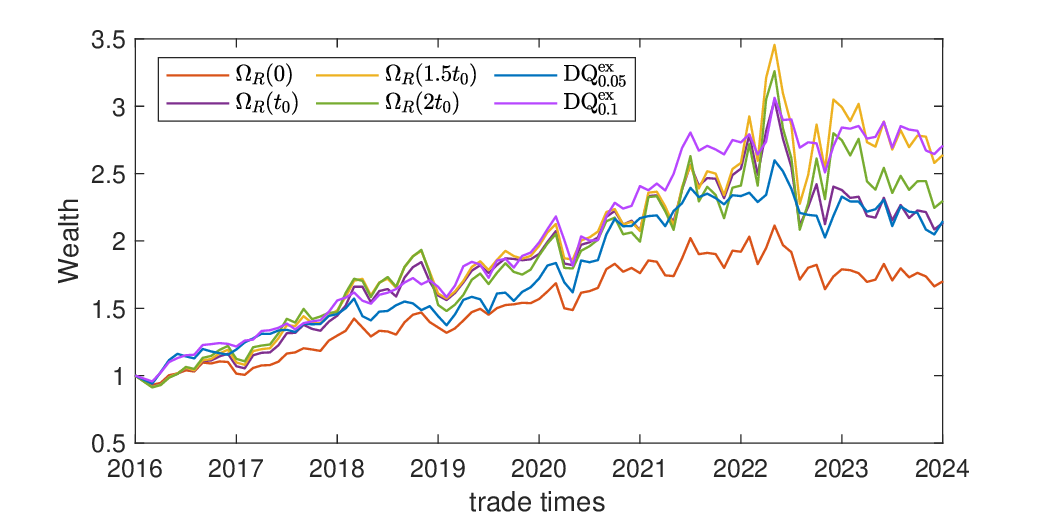}
\end{figure} 
  \begin{table}[t]  
\def\arraystretch{1}
  \begin{center} 
    \caption{ Summary statistics,  including the annualized return (AR), the annualized volatility (AV),  the Sharpe ratio (SR)  for   different portfolio strategies in Figure \ref{sec_40_Omega}.} \label{tab_40_Omega} 
  \begin{tabular}{c|cccccc} 
    $\%$ &${\rm DQ}^{\ex}_{0.05}$&${\rm DQ}^{\ex}_{0.1}$ &$\Omega_R(0)$ &$\Omega_R(t_0)$ & $\Omega_R(1.5t_0)$ &$\Omega_R(2t_0)$ \\    \hline
AR & 10.19 &\bf{13.10}  &  7.20 & 10.07& {12.67} & 10.91 \\
AV & {15.64}  & \bf{15.12} & 15.91 &  20.57&24.36&26.48 \\
SR  & {51.51} &  \bf{72.54}  & {31.93} &38.61 & {43.27}&  {33.13} 
\\
 \hline \hline 
    \end{tabular}
    \end{center}
    \end{table}

\item 
DQ based on expectiles can handle the issue of computation based on small sample, which is a practical challenge for VaR and ES, as well as their corresponding DQ, due to the lack of tail data.\footnote{For instance, if the level $\alpha$ in DQ based on VaR or ES is smaller than $1/N$, where $N$ is the sample size of empirical data, then the empirical estimate of DQ is always $0$;  see Remark \ref{rem:small-sample} in Section \ref{sec:4}.}
An illustration of a portfolio vector
from an equicorrelated normal distribution  is provided in Figure \ref{em_re}, where we can see that the sample estimate of DQ based on expectiles increases when the correlation coefficient increases, whereas the sample estimates of DQs based on VaR and ES stay  at $0$ regardless of the dependence structure. 
This shows that DQ based on expectile does not suffer from the small-sample issue as DQ based on VaR or ES in the context of empirical estimation.

\begin{figure}[t]
\caption{Empirical VS Real $\DQVaR$, $\DQES$ and $\DQex$ for $\mathbf X \sim \mathrm{N}(0, \Sigma)$ and $\alpha=0.02$ where $\Sigma=(\sigma)_{ij}$ with $\sigma_{ii}=1$ and $\sigma_{ij}=r$ for $i\ne j \in [5]$; the empirical value is the mean of 1000 sample estimates which is calculated based on  49 simulated data of $\mathbf X$.}\label{em_re}
      \hspace{-1.5cm}     \includegraphics[height=4.3cm]{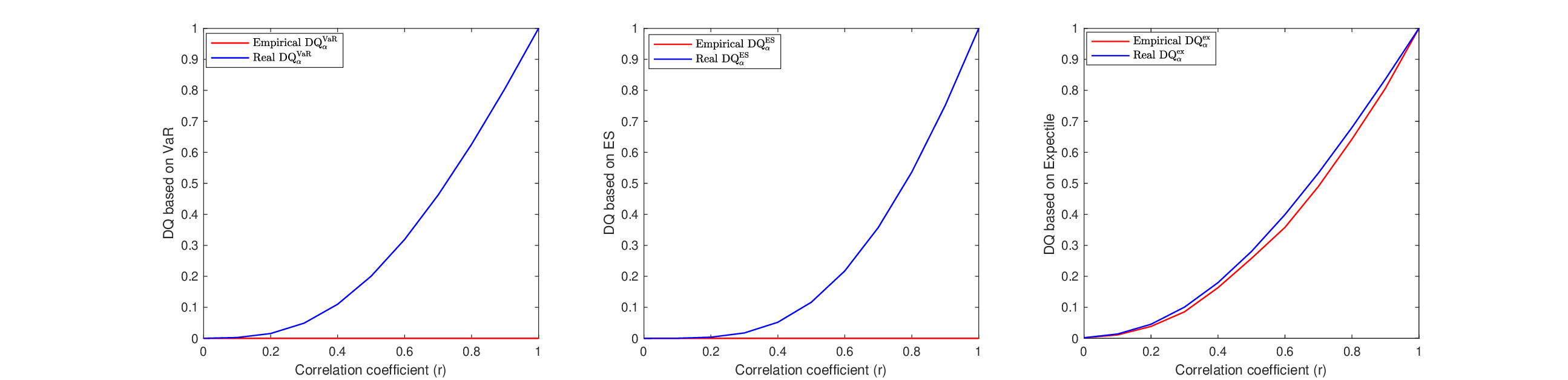}
\end{figure}



\item 
 DQ based on expectiles exhibits pseudo-convexity with respect to portfolio weights (Theorem \ref{prop:pseudo}), which is a property  not shared with DQs based on  VaR.\footnote{Whether DQ based ES is pseudo-convex is not clear and we do not have a proof; it was shown by \cite{HLW24} to satisfy a weaker property, which is quasi-convexity.} This pseudo-convexity allows for gridient descent algorithms, and simplifies the optimization process in portfolio selection. 
Similar to DQ based on VaR or ES, the portfolio optimization based on DQ with expectiles can be efficiently solved through linear programming, corresponding to maximizing the Omega ratio, or convex programming, analogous to the mean-variance optimization framework (Proposition \ref{thm:2}). Table \ref{table:time} reports the programming running time used in solve the portfolio optimization problem in Figure \ref{sec_40}. As we can see, DQ based on expectile shows comparable computional efficiency to other methods.

\begin{table}[!htbp]  
\def\arraystretch{1}
  \begin{center} 
    \caption{Computation times   for different portfolio strategies from   Jan 2016 - Dec 2023 with $\alpha=0.05$.}  \label{table:time}
 \begin{tabular}{c|cccc} 
     &${\rm DQ}^{\ex}_\alpha$ &${\rm DQ}^{\VaR}_\alpha$ &${\rm DQ}^{\ES}_\alpha$ & $\Omega_R(t_0)$  \\    \hline
Time (s) &8.47& 5.63   &    58.55   & 8.65  
\\
 \hline  \hline
\end{tabular}

    \end{center}
    \end{table}

\item Under a mild condition, the value  DQ based on expectiles is the unique solution of the adjust factor $c$ (Proposition \ref{prop:2}) for the risk level $\alpha$ in
$$\ex_{c\alpha} \left(\sum_{i=1}^n X_i\right) = \sum_{i=1}^n \ex_{\alpha}(X_i),$$
where $\ex_\alpha$ is the $\alpha$-expectile (defined in Section \ref{sec:2}) with $\alpha \in (0,1/2)$ and $(X_1,\dots,X_n)$ represents portfolio losses. 
This is due to the strict monotonicity of $\alpha\mapsto \ex_{\alpha}(X)$ when $X$ is not a constant. 
Hence, DQ based on expectile  can be interpreted as the unique adjusted rule for the risk level of the aggregate loss, accounting for the effects of diversification. For DQs based on VaR or ES, such uniqueness result further requires some conditions on the quantile function of the aggregate risk to ensure a unique solution $c$.

\end{enumerate}




We also show that DQ based on expectiles  shares several desirable properties with DQs based on VaR and ES.
 First,   DQ based on expectiles has the same range as DQ based on ES, spanning the full range from 0 to 1, where a value of 0 indicates that the portfolio has no insolvency risk with pooled individual capital, and a value of 1 represents a portfolio relying entirely on a single asset (Proposition \ref{th:ex-01}).
Second, the value of DQ based on expectiles tends to 0 for independent portfolios as the asset number increases, aligning with intuitive diversification principles (Proposition \ref{uncorrelated}).
Finally, DQ based on expectiles has similar explicit formulas and asymptotic behavior as DQs based on VaR and ES (Section \ref{sec:5}).

The paper is organized as follows. In Section \ref{sec:2}, we recall the definition of DQ along with some properties of expectiles. Section \ref{sec:3} presents two alternative formulas and properties for DQ based on expectiles. In Section \ref{sec:4}, we focus on the portfolio selection problem using the expectile-based DQ, where we show the pseudo-convexity of the expectile-based DQ and present two efficient algorithms. In Section \ref{sec:5}, we derive the explicit formulas and the limiting behavior of DQ based on expectiles under the elliptical and multivariate regular variation (MRV) models. Several numerical illustrations and experiments for the portfolio selection problem are provided in Section \ref{sec:6}. Finally, Section \ref{sec:7} concludes the paper.



  \section{Preliminaries}\label{sec:2}

Throughout this paper, we work with a nonatomic probability space $(\Omega,\mathcal F,\p)$. All equalities and inequalities of functionals on $(\Omega,\mathcal F,\p)$ are under $\p$ almost surely ($\p$-a.s.)~sense, and $\p$-a.s.~equal random variables are treated as identical.
A risk measure  $\phi$ is a mapping  from $\X$ to $\R$, where $\X$ is  a convex cone of random variables  on  $(\Omega,\mathcal F,\p)$ representing losses faced by a financial institution or an investor, and $\X$ is assumed to include all constants (i.e., degenerate random variables). For $p\in (0,\infty)$, denote by $L^p=L^p(\Omega,\mathcal F,\p)$ the set of all random variables $X$ with $\E[|X|^p]<\infty$ where $\E$ is the expectation under $\p$.  Write $X\sim F_X$  if the random variable $X$ has the distribution function $F_X$  under $\mathbb{P}$,  and $\overline F_X=1-F_X$. Denote by 
	$X \laweq  Y$ if two random variables $X$ and $Y$ have the same distribution. 	
	We always write $\mathbf X=(X_1,\dots,X_n)$ and  $\mathbf 0 $ for the $n$-vector of zeros. 
	Further, denote by $[n]=\{1,\dots,n\}$ and $\R_+=[0,\infty)$.  Terms such as increasing or decreasing functions are in the non-strict sense. 
The diversification quotient  is defined as follows by \cite{HLW24}.
\begin{definition} Let $\rho=\left(\rho_{\alpha}\right)_{\alpha \in I}$ be a class of risk measures indexed by $\alpha \in I=$ $(0, \overline{\alpha})$ with $\overline{\alpha} \in(0, \infty]$ such that $\rho_{\alpha}$ is decreasing in $\alpha$. For $\mathbf{X} \in \mathcal{X}^{n}$, the \emph{diversification quotient} (DQ) based on the class $\rho$ at level $\alpha \in I$ is defined by $$ \mathrm{DQ}_{\alpha}^{\rho}(\mathbf{X})=\frac{\alpha^{*}}{\alpha}, \quad \text { where } \alpha^{*}=\inf \left\{\beta \in I: \rho_{\beta}\left(\sum_{i=1}^{n} X_{i}\right) \leq \sum_{i=1}^{n} \rho_{\alpha}\left(X_{i}\right)\right\} $$ with the convention $\inf (\varnothing)=\overline{\alpha}$. \label{def:1} \end{definition}
The parameterized class $\rho$ includes  commonly used risk measures such as   Value-at-Risk (VaR), the Expected Shortfall (ES), expectiles, mean-variance, and entropic risk measures. 
Among these,    VaR and ES  are the most  two popular classes of risk measures in banking and insurance practice. {We use ``small $\alpha$" convention for VaR and ES.} 
The VaR at level $\alpha \in [0,1)$ is defined as$$
	\VaR_\alpha(X)=\inf\{x\in \R: \p(X\le x) \ge 1-\alpha\},~~~X\in L^0,
	$$
	and the ES (also called CVaR, TVaR or AVaR) at level $\alpha \in (0,1)$ is defined as
	$$
	\ES_{\alpha}(X) = \frac 1 \alpha \int_{0}^\alpha \VaR_\beta(X) \d \beta,~~~X\in L^1,
	$$
	and  $\ES_0(X)=\esssup(X)=\VaR_0(X)$, which may be $\infty$.
	The probability level $\alpha$ above is typically very small, e.g., $0.01$ or $0.025$ as  in the regulatory document \cite{BASEL19}.

 The expectile at a given confident level $\alpha\in(0,1)$ for a loss random variable is the unique minimizer of an asymmetric quadratic loss. Mathematically, the expectile   of a loss random variable $X\in L^2$  at a confidence level $\alpha \in(0,1)$, denoted by $\ex_{\alpha}(X)$, is defined as the following minimizer:  \begin{equation}\label{eq:expectile}
\ex_{\alpha}(X)=\argmin_{x \in \mathbb{R}}   \left\{(1-\alpha) \mathbb{E}\left[(X-x)_{+}^{2}\right]+\alpha \mathbb{E}\left[(X-x)_{-}^{2}\right]\right\}
\end{equation}
where $(x)_+=\max\{0,x\}$ and  $(x)_+=\max\{0,-x\}$.
By the first-order condition, $\ex_{\alpha}(X)$  is equal to the unique number $t$   satisfying 
\begin{equation}\label{eq:ex} (1-\alpha) \mathbb{E}\left[(X-t)_{+}\right]=\alpha \mathbb{E}\left[(X-t)_{-}\right].
\end{equation}
The equation \eqref{eq:ex} is well-defined for each $X \in L^1$, which serves as the natural domain for expectiles. We take \eqref{eq:ex} as the definition of expectiles, and let $\X=L^1$ in the following context.

We first revisit some properties of expectiles.  From \eqref{eq:ex}, we can observe that $\alpha\mapsto \ex_\alpha(X)$ is decreasing  and it is strictly decreasing when $X$ is non-degenerate; see e.g. \citet[Theorem 1]{NP87}. Furthermore, \eqref{eq:ex}    is equivalent  to 
\begin{equation}\label{eq:2} \ex_\alpha(X)=\mathbb{E}[X]+\theta \mathbb{E}\left[(X-\ex_\alpha(X))_{+}\right] \quad \text { with } \theta=\frac{1-2 \alpha}{\alpha}, \end{equation}
due to the equality of $(x)_{+}-(x)_{-}=x$.  Since $\theta$ in \eqref{eq:2} is strictly decreasing in $\alpha \in(0,1)$ and $-1<\theta<\infty$, it follows that  $\ex_\alpha(X)=\mathbb{E}[X]$ for $\alpha=1/2$, $\ex_\alpha(X) \leq \mathbb{E}[X]$ for $\alpha>1/2$,  and $\ex_\alpha(X)  \geq \mathbb{E}[X]$ for $\alpha<1/2$, where the strict inequalities between $\ex_\alpha(X)$ and $\E[X]$ hold if $X$ is non-degenerate.
When $\alpha<1/2$, the positive $\theta$ can be interpreted as the safe loading of the upper partial moment of loss in addition to the mean of a loss. 
In the risk management of insurance and finance, a risk measure of a loss random variable $X$ is usually viewed as a premium or a regulatory capital, which is often required to be larger than the expected loss $\mathbb{E}[X]$.  Therefore, in insurance and finance, researchers are usually interested in the case with $\alpha<1/2$ or equivalently $\theta>0$.  
 We will do the same in this paper.   

 
A  risk measure that can be defined as the minimizer of a suitable expected loss function is called elicitable (\cite{G11}); see  \cite{EPRWB14} for more discussion of elicitability in risk management. 
  \cite{Z16} proved that
 $\ex_\alpha$ for  $\alpha \in(0,1/2]$ is the only class of coherent risk measures that are also elicitable, and this feature makes expectile special in the contexts of backtesting and forecast comparison, among all coherent risk measures. 
 
    The acceptance set of expectiles can be written as
\begin{equation}\label{eq:acceptance}
\mathcal{A}_{\ex_\alpha}=\left\{X ~\Bigg|~ \frac{\mathbb{E}\left[X_{-}\right]}{\mathbb{E}\left[X_{+}\right]} \geq \frac{1-\alpha}{\alpha}\right\},~~~\text{or~equivalently,}~~~\mathcal{A}_{\ex_\alpha}=\left\{X ~\Bigg|~ \frac{\mathbb{E}\left[X_{+}\right]}{\mathbb{E}\left[|X|\right]} \leq \alpha\right\};
\end{equation}
 see, e.g., Equation (4) of \cite{BD17}.   
 
 Now, we provide the main concept of our study in the definition below.
\begin{definition}\label{de:DQex}  For $\mathbf{X} \in \mathcal{X}^{n}$, DQ based on   expectiles at level $\alpha \in (0,1)$ is defined by \begin{equation}\label{eq:DQ_ex0}\mathrm{DQ}_{\alpha}^{\ex}(\mathbf{X})=\frac{\alpha^*}{\alpha},~~~~\alpha^*=\inf \left\{\beta \in (0,1): \ex_{\beta}\left(\sum_{i=1}^{n} X_{i}\right) \leq \sum_{i=1}^{n} \ex_{\alpha}\left(X_{i}\right)\right\}\end{equation}
with the convention $\inf (\varnothing)=1$.
\end{definition}

\begin{remark}
 Since $\ex_{1/2}$ is the mean and   $\ex_\beta(X)>\E[X]$ for all non-degenerate $X\in L^1$ and $\beta<1/2$, it holds that $\mathrm{DQ}^\ex_{1/2}(\mathbf X)=0$ if $\sum_{i=1}^n  X_i$ is a constant, and $\mathrm{DQ}^\ex_{1/2} (\mathbf X)=1$  otherwise.
Hence, case $\alpha=1/2$  is trivial. In the following discussion, we will exclude $\alpha=1/2$.
\end{remark}

We are mainly interested in the case $\alpha \in (0,1/2)$ throughout the paper, except for a symmetry statement in Proposition \ref{prop:2} where $\mathrm{DQ}^{\ex}_{1-\alpha}$ appears. 

As a member of DQ, $\DQex$ satisfies the following three simple properties of diversification indices.
\begin{enumerate}[(i)]
\item[{[+]}] Non-negativity:  $\DQex(\mathbf{X})\ge 0$ for all $\mathbf X\in \X^n$.
\item[{[LI]}] Location invariance: $\DQex(\mathbf{X}+\mathbf{c})=\DQex(\mathbf{X})$  for all $\mathbf{c}=(c_1,\dots,c_n) \in \R^n$ and all $ \mathbf X\in \X^n$.
\item[{[SI]}] Scale invariance: $\DQex(\lambda \mathbf{X})=\DQex(\mathbf{X})$  for all   $\lambda>0$ and all $ \mathbf X\in \X^n$.
\end{enumerate}

\section{DQ based on expectiles}\label{sec:3}
In this section, we will focus on the theoretical properties $\rm{DQ}^{\ex}_\alpha$. First of all, we present two alternative formulas of $\rm{DQ}^{\ex}_\alpha$, which provide convenience for computation and
optimization.
One of the formulas is related to  the Omega ratio introduced by \cite{KS02}, while the other shows that $\DQex$ measures the improvement of insolvency probability when risks are pooled, similar to the expression of DQs based VaR and ES in \citet[Theorem 4]{HLW24}.


\begin{theorem}\label{th:var}
 For a given $\alpha \in(0,1/2)$, ${\rm DQ}^{\ex}_\alpha $   has the alternative formula
\begin{equation}\label{eq:var-alter}
{\rm DQ}^{\ex}_\alpha (\mathbf X)  =\frac{1}{\alpha}
\frac{\mathbb{E}\left[ (S-t) _+\right]}{\mathbb{E}\left[\left| (S-t) \right|\right]},\mbox{~where }S=\sum_{i=1}^n X_i \mbox{~and~}t=\sum_{i=1}^n \ex_\alpha(X_i),~~~~\mathbf X\in \X^n.
\end{equation}
Moreover, let  $F_{S}$ be the distribution of $S$, then  ${\rm DQ}^{\ex}_\alpha (\mathbf X)$ can also been computed by
\begin{equation}\label{eq:DQ_ex}
\mathrm{DQ}^{\ex}_\alpha (\mathbf{X})=\frac{1-\widetilde{F}_{S}(t) }{\alpha},~~~\mathbf X\in \X^n,\end{equation}
where
\begin{equation}\label{t_F}
\widetilde{F}_{S}(y)=\frac{y F_{S}(y)-\int_{-\infty}^y x \mathrm{d} F_{S}(x)}{2\left(y F_{S}(y)-\int_{-\infty}^y x \mathrm{d} F_{S}(x) \right)+\E[S]-y}.
\end{equation}
\end{theorem}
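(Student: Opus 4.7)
The plan is to reduce the infimum defining $\alpha^*$ to a single equation via the strict monotonicity and continuity of $\beta \mapsto \ex_\beta(S)$, and then solve that equation directly using the first-order condition \eqref{eq:ex} for expectiles. Formula \eqref{eq:DQ_ex} will then drop out from a short algebraic simplification of $\widetilde F_S$.

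For \eqref{eq:var-alter}, I first assume $S$ is non-degenerate and write $t = \sum_{i=1}^n \ex_\alpha(X_i)$. Since $\ex_\alpha$ is coherent (hence subadditive) for $\alpha \in (0,1/2)$, we have $\ex_\alpha(S) \le t$, so $\alpha^* \le \alpha$. The map $\beta \mapsto \ex_\beta(S)$ is strictly decreasing on $(0,1)$ (as recalled above the theorem) and continuous in $\beta$ (directly from \eqref{eq:ex}), with $\ex_\beta(S) \to \esssup(S)$ as $\beta \to 0^+$. In the generic case $t < \esssup(S)$, for $\beta$ small enough one has $\ex_\beta(S) > t$, so $\alpha^* > 0$, and by continuity the infimum is attained with $\ex_{\alpha^*}(S) = t$. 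Applying \eqref{eq:ex} to $X = S$ at level $\alpha^*$ gives $(1-\alpha^*)\E[(S-t)_+] = \alpha^* \E[(S-t)_-]$, and solving for $\alpha^*$ yields $\alpha^* = \E[(S-t)_+]/\E[|S-t|]$, which is \eqref{eq:var-alter} after dividing by $\alpha$. The boundary case $t \ge \esssup(S)$ forces $\alpha^* = 0$ while $\E[(S-t)_+] = 0$, so both sides vanish (reading $0/0$ as $0$); the degenerate case where $S$ is a.s.\ constant is handled in the same way.

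For \eqref{eq:DQ_ex}, I observe the identity $y F_S(y) - \int_{-\infty}^y x \, \d F_S(x) = \int_{-\infty}^y (y-x) \, \d F_S(x) = \E[(S-y)_-]$. Combined with $\E[S] - y = \E[(S-y)_+] - \E[(S-y)_-]$, the denominator in \eqref{t_F} simplifies to $\E[(S-y)_-] + \E[(S-y)_+] = \E[|S-y|]$. Hence $\widetilde F_S(y) = \E[(S-y)_-]/\E[|S-y|]$ and $1 - \widetilde F_S(t) = \E[(S-t)_+]/\E[|S-t|]$, and substituting into \eqref{eq:var-alter} delivers \eqref{eq:DQ_ex}.

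The only real delicacy, rather than a hard step, is justifying that the infimum $\alpha^*$ is attained with equality $\ex_{\alpha^*}(S) = t$ rather than with a strict inequality; this rests on the continuity and strict monotonicity of $\beta \mapsto \ex_\beta(S)$ for non-degenerate $S$. Once this is in place, the rest is a direct application of \eqref{eq:ex} and a cosmetic rewrite of the expression in \eqref{t_F}.
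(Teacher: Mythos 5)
Your proof is correct, and its second half (simplifying $\widetilde F_S$ via $yF_S(y)-\int_{-\infty}^y x\,\d F_S(x)=\E[(S-y)_-]$ and $\E[S]-y=\E[(S-y)_+]-\E[(S-y)_-]$) is exactly the paper's computation. The first half takes a genuinely different route. The paper uses cash additivity to rewrite the defining set as $\{\beta\in(0,1): S-t\in\mathcal{A}_{\ex_\beta}\}$ and then reads the infimum directly off the explicit acceptance-set description \eqref{eq:acceptance}: $S-t\in\mathcal{A}_{\ex_\beta}$ if and only if $\E[(S-t)_+]/\E[|S-t|]\le\beta$, so the infimum of such $\beta$ is the ratio itself, with no attainment, continuity, or boundary-case discussion needed. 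You instead show that the infimum is attained at the unique root of $\ex_{\alpha^*}(S)=t$, using strict monotonicity and continuity of $\beta\mapsto\ex_\beta(S)$ together with $\ex_\beta(S)\uparrow\esssup(S)$, and then solve the first-order condition \eqref{eq:ex}; this is valid but forces you to treat separately the cases $t\ge\esssup(S)$ and degenerate $S$, which you handle correctly (the $0/0=0$ reading is equally implicit in the paper's use of \eqref{eq:acceptance}). Since \eqref{eq:acceptance} is itself a repackaging of \eqref{eq:ex}, the two arguments rest on the same identity; the paper's version absorbs your case analysis into the acceptance-set formulation and is shorter, while yours is self-contained in that it does not need to quote the acceptance set from the literature.
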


The formula in \eqref{eq:var-alter} is closely related to the concept of  Omega ratio.
For a payoff $R$, the Omega ratio  is defined as \begin{equation}\label{eq:Om1}
\Omega_R(t) = \frac{\mathbb{E}[(R - t)_{+}]}{\mathbb{E}[(R - t)_{-}]},
\end{equation}
where $t\in\R$.   
This ratio compares the expected upside deviation (gains) to the expected downside deviation (losses) for a given threshold. 
Therefore, the larger Omega ratio indicates that the expected gains outweigh the expected losses, suggesting a more favorable risk-return profile.  It is clear from \eqref{eq:ex} that  
$
\Omega_X(\ex_\alpha)={\alpha}/{(1-\alpha)};
$  
that is the Omega ratio of $X$ is $ \alpha/(1-\alpha)$ when setting $t = \ex_\alpha(X)$ as the threshold. In the special case where $t = 0$, $\Omega_R(0)$ is also known as the gain-loss ratio (see \cite{BL00}). In our context, where positive 
$X$ represents risk, a lower value of 
$\Omega_X$ is preferred.
Note that the Omega ratio is defined for one random variable (either representing an individual asset or a portfolio), and it does not quantify diversification.  On the other hand, $\DQex$ takes  into account both the total loss $S$ and the marginal risks through $t=\sum_{i=1}^n \ex_\alpha(X_i)$ in \eqref{eq:var-alter}.

\begin{remark}\label{rem:omega}
Let $S_{\mathbf X}=\sum_{i=1}^n X_i$ for $\mathbf X\in \X^n$.
Note that we can rewrite $\mathrm{DQ}^{\ex}_\alpha$ with the definition of Omega ratio as
\begin{align*}
\mathrm{DQ}^{\ex}_\alpha (\mathbf{X})&=\frac{1}{\alpha(1+1/\Omega_ {S_{\mathbf X}}(\sum_{i=1}^n \ex_{\alpha}(X_i)) )}.
\end{align*} 
Thus,  for $\mathbf X, \mathbf Y\in \X^n$,  we have 
$$ \Omega_ {S_{\mathbf X}}\left(\sum_{i=1}^n \ex_{\alpha}(X_i)\right) \leq \Omega_ {S_{\mathbf Y}}\left(\sum_{i=1}^n \ex_{\alpha}(Y_i)\right) ~\Longleftrightarrow~ \mathrm{DQ}^{\ex}_\alpha (\mathbf{X}) \leq \mathrm{DQ}^{\ex}_\alpha (\mathbf{Y}).$$
  This indicates that the diversification level of a vector $\mathbf{X}$  is uniquely determined by the Omega ratio of the aggregate risk at the threshold $\sum_{i=1}^{n} \ex_{\alpha}(X_i)$.
\end{remark}

The formula in \eqref{eq:DQ_ex} can be calculated by transforming the original distribution of $S_{\mathbf X}$,  relying on the properties of expectiles introduced by  \citet[ Proposition 8.23]{MFE15}, which shows  that \begin{equation}\label{trans_p}\ex_\alpha(S_{\mathbf X}) = \widetilde{F}_{S_{\mathbf X}}^{-1}(1-\alpha),\end{equation} 
with $\widetilde{F}_{S_{\mathbf X}}$ defined in \eqref{t_F}.
It is easy to check that  $\widetilde F_{S_{\mathbf X}}$ is a continuous distribution function that strictly increases its support. Then \eqref{eq:DQ_ex} can be rewritten as 
$$
\DQex (\mathbf X) =\frac{1}{\alpha} \mathbb Q\left(S_{\mathbf X}> \sum_{i=1}^n \ex_{\alpha}(X_i)\right),~~~\mathbf X\in \X^n, 
$$
for some probability measure $\mathbb{Q}$. Similar representations of $\DQVaR$ and $\DQES$ can be found in \citet[Remark 2]{HLW23}.


The diversification ratio (DR)      based on expectiles  is defined as\footnote{If the denominator in the definition of ${\rm DR}^{\phi}(\mathbf X)$ is $0$, then we use the convention $0/0=0$ and $1/0=\infty$.}  
$$
{\rm DR}^{\ex_\alpha}(\mathbf X)= \frac{ \ex_\alpha \left(\sum_{i=1}^n X_i\right)}{ \sum_{i=1}^n \ex_\alpha(X_i)}.
$$
We can see that DQ measures the improvement of insolvency probability, and DR measures the expectile improvement. For more comparison of DQs and DRs based on VaR and ES, we refer to  Section 5 of \cite{HLW24}.

We summarize several dependence structures
that correspond to special values $0$ and  $1$ of    $\mathrm{DQ}^{\ex}_\alpha $. 

\begin{proposition}\label{th:ex-01}
 For $\alpha \in(0,1/2)$ and $n\ge 2$,   we have
 \begin{enumerate}[(i)]
  \item $\left\{\mathrm{DQ}^{\ex}_{\alpha}(\mathbf{X})\mid\mathbf{X} \in \X^n\right\}=[0,1]$.
  \item ${\rm DQ}^{\ex}_\alpha (\mathbf X)=0$ if and only if $\sum_{i=1}^ n X_i\le \sum_{i=1}^ n \ex_\alpha(X_i)$ a.s.  In case $\sum_{i=1}^ n X_i$ is a constant,    $ {\rm DQ}^{\ex}_\alpha (\mathbf X)=0$.
  \item If $\mathbf X=(\lambda_1X, \dots, \lambda_n X)$ for some $(\lambda_1, \dots, \lambda_n) \in \R^n_+\setminus \{\mathbf 0\}$ and  non-degenerate $X\in \X$, then $ {\rm DQ}^{\ex}_\alpha (\mathbf X)=1$.

\end{enumerate}
   \end{proposition}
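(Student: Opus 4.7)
The plan is to prove the three parts in the order (iii), (ii), (i), since (iii) and (ii) supply the endpoint values $1$ and $0$ that are needed for the range claim in (i).

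For (iii), I would use positive homogeneity of the coherent risk measure $\ex_\alpha$. Writing $\Lambda=\sum_{i=1}^n\lambda_i>0$, one gets $S=\Lambda X$, $t=\Lambda\ex_\alpha(X)$, and $\ex_\beta(S)=\Lambda\ex_\beta(X)$, so the defining inequality $\ex_\beta(S)\le t$ reduces to $\ex_\beta(X)\le\ex_\alpha(X)$. Non-degeneracy of $X$ makes $\beta\mapsto\ex_\beta(X)$ strictly decreasing (\citet[Theorem~1]{NP87}), so this is equivalent to $\beta\ge\alpha$, yielding $\alpha^*=\alpha$ and $\DQex(\mathbf X)=1$. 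For (ii), I would read the equivalence off the representation \eqref{eq:var-alter}: when $\mathbb P(S\neq t)>0$ the denominator is strictly positive and $\DQex=0$ iff $\E[(S-t)_+]=0$ iff $S\le t$ a.s.; when $S=t$ a.s.\ the conclusion is immediate from Definition~\ref{de:DQex}. The special claim about constant $S$ then follows because $\ex_\alpha(X_i)\ge\E[X_i]$ for $\alpha\le 1/2$ (recorded after \eqref{eq:2}) forces $t\ge\E[S]=S$.

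For (i), the upper bound $\DQex\le 1$ comes from subadditivity of the coherent $\ex_\alpha$: $\ex_\alpha(S)\le\sum_{i=1}^n\ex_\alpha(X_i)=t$ places $\alpha$ itself in the infimum-set, forcing $\alpha^*\le\alpha$. Together with non-negativity this gives $\DQex\in[0,1]$. To show that every intermediate value is attained, I would construct a continuous path and invoke the intermediate value theorem. Pick a centered, non-degenerate $X\in L^1$ with $\esssup X=\infty$ (e.g.\ standard normal), and set $\mathbf X^{(\lambda)}=(X,-\lambda X,0,\dots,0)$ for $\lambda\in[0,1]$. Parts (iii) and (ii) handle the endpoints $\lambda=0$ (value $1$) and $\lambda=1$ (value $0$). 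For $\lambda\in(0,1)$, positive homogeneity and the negation formula $\ex_\alpha(-X)=-\ex_{1-\alpha}(X)$ (a one-line consequence of \eqref{eq:ex}) give
\[
\ex_\beta\bigl(S^{(\lambda)}\bigr)=(1-\lambda)\ex_\beta(X),\qquad t^{(\lambda)}=\ex_\alpha(X)-\lambda\ex_{1-\alpha}(X)>0,
\]
so $\alpha^*(\lambda)$ is the unique solution of $\ex_\beta(X)=[\ex_\alpha(X)-\lambda\ex_{1-\alpha}(X)]/(1-\lambda)$. The right-hand side continuously and strictly increases from $\ex_\alpha(X)$ to $\infty$ as $\lambda\uparrow 1$, and $\beta\mapsto\ex_\beta(X)$ is a continuous strictly-decreasing bijection of $(0,\alpha]$ onto $[\ex_\alpha(X),\infty)$, so $\alpha^*(\lambda)$ decreases continuously from $\alpha$ to $0$ and the intermediate value theorem yields every value of $\DQex$ in $[0,1]$.

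The main obstacle is the surjectivity in (i); everything else reduces to formulas already in hand. In particular, the construction requires $X$ to have unbounded upper essential supremum, without which $\ex_\beta(X)$ would stay bounded as $\beta\downarrow 0$ and the path could not drive $\alpha^*$ all the way to $0$ through the interior of $(0,\alpha)$.
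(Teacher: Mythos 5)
Your proof is correct, and for the surjectivity claim in (i) it takes a genuinely different route from the paper's. Parts (ii) and (iii) match the paper in substance: the paper also reads (ii) off \eqref{eq:var-alter} and settles the constant-sum case via $\ex_\alpha(X_i)\ge\E[X_i]$, and for (iii) it plugs the comonotonic vector into \eqref{eq:var-alter} and evaluates the ratio to $1$ using \eqref{eq:ex}, whereas you argue from positive homogeneity and strict monotonicity of $\beta\mapsto\ex_\beta(X)$ directly at the level of Definition \ref{de:DQex}; both are one-line computations. The real divergence is in (i): after obtaining the bound $\DQex\le 1$ from subadditivity exactly as you do, the paper invokes joint mixability (Theorem 3.1 of \cite{WW16}) to produce, for each $t\in[0,2]$, a pair of $\mathrm{U}[-1,1]$ marginals whose sum is $\mathrm{U}[-t,t]$, computes $\widetilde F_S$ in closed form, and applies the intermediate value theorem to the resulting explicit expression $(2y_1-t)^2/(2\alpha(4y_1^2+t^2))$. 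Your antithetic path $(X,-\lambda X,0,\dots,0)$ is more elementary: it needs no joint-mixability input and no computation of $\widetilde F_S$, only continuity and strict monotonicity of $\beta\mapsto\ex_\beta(X)$ together with the negation identity $\ex_\alpha(-X)=-\ex_{1-\alpha}(X)$, at the cost of working with an unbounded $X$ rather than with bounded uniforms. One small remark: unboundedness of $X$ is convenient but not strictly necessary, since for bounded non-degenerate $X$ the map $\lambda\mapsto\alpha^*(\lambda)$ is still continuous and reaches $0$ (it equals $0$ once $[\ex_\alpha(X)-\lambda\ex_{1-\alpha}(X)]/(1-\lambda)$ exceeds $\esssup(X)$, and tends to $0$ continuously as that threshold is approached); this does not affect the validity of your argument as written, which correctly chooses an unbounded $X$ so that $\ex_\beta(X)$ is a bijection of $(0,\alpha]$ onto $[\ex_\alpha(X),\infty)$.
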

  
Proposition \ref{th:ex-01} shows that  ${\rm DQ}^{\ex}_\alpha (\mathbf X)$
takes value on a continuous and bounded range. In contrast,  $\mathrm{DR}^{\ex_\alpha}$ falls into the range $[0,1]$ only when the expectile of the total risk is nonnegative. 
Similar to DQs based on VaR and ES, the value of $\DQex$  equals to 0 indicating that there is no insolvency risk with pooled individual capital and its
value is 1 if there is strong positive dependence such that the portfolio relies on a single asset.


In  the following, we further consider some properties of ${\rm DQ}^{\ex}_\alpha$ based on the properties of expectiles. 

\begin{proposition}\label{prop:2}
Assume $\alpha \in(0,1/2)$.
\begin{enumerate}[(i)]
\item If $\DQex(\mathbf X)>0$,  then
 $\DQex(\mathbf X)$ is the unique solution of $c$ that satisfies
\begin{align} \label{eq: exist}\ex_{c\alpha} \left(\sum_{i=1}^n X_i\right) = \sum_{i=1}^n \ex_{\alpha}(X_i).\end{align}
\item If $\mathbf X$ is non-degenerate,
\begin{equation}\label{eq:sym}
\alpha\mathrm{DQ}^{\ex}_\alpha (\mathbf{X})+ (1-\alpha)\mathrm{DQ}^{\ex}_{1-\alpha} (-\mathbf{X})=1. \end{equation}
\end{enumerate}
\end{proposition}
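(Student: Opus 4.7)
My plan is to prove both parts by leveraging two structural features of expectiles: (a) $\beta \mapsto \ex_\beta(Y)$ is continuous on $(0,1)$ and strictly decreasing whenever $Y$ is non-constant, as noted after \eqref{eq:ex}, and (b) the reflection identity $\ex_{1-\beta}(-Y) = -\ex_\beta(Y)$, which follows directly from \eqref{eq:ex} after swapping the roles of $(\cdot)_+$ and $(\cdot)_-$ and solving the resulting linear relation between $\beta$ and $1-\beta$.

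For part (i), set $S=\sum_{i=1}^n X_i$ and $t=\sum_{i=1}^n \ex_\alpha(X_i)$. Coherence of $\ex_\alpha$ for $\alpha\le 1/2$ gives $\ex_\alpha(S)\le t$, so $\alpha$ lies in $A:=\{\beta\in(0,1):\ex_\beta(S)\le t\}$ and $\alpha^*\le\alpha$. The hypothesis $\DQex_\alpha(\mathbf X)>0$ combined with Proposition \ref{th:ex-01}(ii) rules out $S\le t$ a.s., which in turn rules out $S$ being constant (a constant $S$ would give $S=\ex_\alpha(S)\le t$). Hence $\beta\mapsto\ex_\beta(S)$ is strictly decreasing and continuous, so $A$ is a right-ray with positive infimum and $\ex_{\alpha^*}(S)=t$ by continuity; strict monotonicity then yields uniqueness of the solution $c=\alpha^*/\alpha$ of $\ex_{c\alpha}(S)=t$.

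For part (ii), the reflection identity gives $\sum_{i=1}^n \ex_{1-\alpha}(-X_i)=-t$ and $\ex_\beta(-S)=-\ex_{1-\beta}(S)$, so the defining condition $\ex_\beta(-S)\le -t$ for $\DQex_{1-\alpha}(-\mathbf X)$ is equivalent to $\ex_{1-\beta}(S)\ge t$. If $\DQex_\alpha(\mathbf X)>0$, part (i) gives $\ex_{\alpha^*}(S)=t$, and strict monotonicity yields $\ex_{1-\beta}(S)\ge t \Leftrightarrow 1-\beta\le\alpha^*$, so the infimum equals $1-\alpha^*$ and $\DQex_{1-\alpha}(-\mathbf X)=(1-\alpha^*)/(1-\alpha)$; the left-hand side of \eqref{eq:sym} then telescopes to $\alpha^*+(1-\alpha^*)=1$. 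If instead $\DQex_\alpha(\mathbf X)=0$, I will show $\ex_\gamma(S)<t$ strictly for every $\gamma\in(0,1)$: equality would force $(S-t)_+=0$ a.s.\ and, through \eqref{eq:ex}, $\E[(S-t)_-]=0$, giving $S\equiv t$; but non-degeneracy of $\mathbf X$ with $\alpha<1/2$ then contradicts $t=\sum \ex_\alpha(X_i)>\sum\E[X_i]=\E[S]=t$ (strict by the paper's strict mean/expectile comparison). Consequently the set defining $\DQex_{1-\alpha}(-\mathbf X)$ is empty, its infimum is $1$, and $\alpha\cdot 0+(1-\alpha)\cdot 1/(1-\alpha)=1$.

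The main obstacle is the $\DQex_\alpha(\mathbf X)=0$ branch of (ii): since part (i) is not directly available there, I must upgrade the weak inequality $S\le t$ a.s.\ (from Proposition \ref{th:ex-01}(ii)) to the uniform strict inequality $\ex_\gamma(S)<t$ on all of $(0,1)$, which is exactly where non-degeneracy of $\mathbf X$ and the strict inequality $\ex_\alpha(X_i)>\E[X_i]$ for non-constant $X_i$ when $\alpha<1/2$ are needed.
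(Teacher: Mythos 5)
Your proposal is correct and follows essentially the same route as the paper's proof: part (i) via strict monotonicity and continuity of $\beta\mapsto\ex_\beta(S)$ once $\DQex(\mathbf X)>0$ rules out $S\le t$ a.s., and part (ii) via the reflection identity $\ex_{1-\beta}(-S)=-\ex_\beta(S)$ with the same case split on whether $\DQex(\mathbf X)=0$. The only noteworthy difference is in the $\DQex(\mathbf X)=0$ branch, where you establish $\ex_\gamma(S)<t$ for all $\gamma$ by a contradiction argument that also covers the case where $\sum_{i=1}^n X_i$ is constant; this is slightly more careful than the paper, which invokes strict monotonicity of $\beta\mapsto\ex_\beta(\sum_{i=1}^n X_i)$ directly from non-degeneracy of the vector $\mathbf X$.
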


Note that under the mild assumption stated in Proposition \ref{prop:2} (i), we can show the existence and uniqueness of $c$ such that equation \eqref{eq: exist} holds. The conditions for the DQs based on VaR and ES require a bit further, as neither $ \ES_\alpha$ nor $ \VaR_\alpha $ are strictly decreasing in $\alpha$.    For $\mathrm{DQ}^\VaR_\alpha$, we additionally require that $\sum_{i=1}^n X_i$ has a continuous distribution. For $\mathrm{DQ}^\ES_\alpha$, it is enough that $\VaR_{\beta}(\sum_{i=1}^n X_i)$ is not constant over the interval $(0, \alpha)$.  Additionally, although we focus on  $\alpha \in (0, 1/2)$, we can easily compute the $\DQex$ for $\alpha \in (1/2, 1)$ using  \eqref{eq:sym} by symmetric property of expectiles when $\mathbf X$ is non-degenerate. If $\mathbf X$ is a constant vector, it is clear that $\DQex(\mathbf X)=0$ for all $\alpha \in (0,1)$.

An independent portfolio with more assets is widely regarded as better diversified. As $n \to \infty$, we will expect that the value of  $\DQex$ for the $n$-asset independent portfolio approaches zero.
We now examine the asymptotic behavior of DQ for such large portfolios. 
\begin{proposition}\label{uncorrelated}Let $X_1, X_2, \ldots$ be a sequence of  uncorrelated random variables in $L^2$ satisfying $\inf_{i\in \mathbb N}\{\ex_\alpha(X_i)-\E[X_i]\}>0$ and $\sup _{i \in \mathbb{N}} \mathrm{var}\left(X_i\right)<\infty$.  For $\alpha \in(0,1/2)$, we have 
$$
\lim _{n \rightarrow \infty} \mathrm{DQ}_\alpha^\ex(X_1, \ldots, X_n)=0.
$$
\end{proposition}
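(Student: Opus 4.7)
The plan is to invoke the alternative formula from Theorem \ref{th:var}, center the aggregate loss at its mean, and then exploit the mismatch between the $n$-rate of $t - \E[S]$ and the $\sqrt{n}$-rate of the standard deviation of $S$. Specifically, I would set $S = \sum_{i=1}^n X_i$, $t = \sum_{i=1}^n \ex_\alpha(X_i)$, $Z = S - \E[S]$, and $a = t - \E[S]$. By Theorem \ref{th:var},
\[
\mathrm{DQ}_\alpha^{\ex}(X_1,\dots,X_n) \;=\; \frac{1}{\alpha}\,\frac{\E[(Z-a)_+]}{\E[|Z-a|]}.
\]

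Next, I would record the two key quantitative inputs. Let $c_0 := \inf_{i}\{\ex_\alpha(X_i) - \E[X_i]\} > 0$ (this is positive by hypothesis, consistent with the fact noted after \eqref{eq:2} that $\ex_\alpha(X) \geq \E[X]$ strictly for non-degenerate $X$ when $\alpha < 1/2$) and $M := \sup_i \mathrm{var}(X_i) < \infty$. Then $a \geq n c_0$, while uncorrelatedness gives $\mathrm{var}(Z) = \sum_{i=1}^n \mathrm{var}(X_i) \leq nM$. The linear growth of $a$ versus square-root growth of $\sqrt{\mathrm{var}(Z)}$ is precisely what drives the convergence.

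Now I would bound the numerator and denominator. For the numerator, I would use the layer-cake identity $\E[(Z-a)_+] = \int_a^\infty \p(Z > z)\,dz$ combined with Chebyshev ($\E[Z] = 0$) to get
\[
\E[(Z-a)_+] \;\leq\; \int_a^\infty \frac{\mathrm{var}(Z)}{z^2}\,dz \;=\; \frac{\mathrm{var}(Z)}{a} \;\leq\; \frac{nM}{nc_0} \;=\; \frac{M}{c_0}.
\]
For the denominator, Jensen's inequality yields $\E[|Z-a|] \geq |\E[Z-a]| = a \geq nc_0$. Dividing,
\[
\mathrm{DQ}_\alpha^{\ex}(X_1,\dots,X_n) \;\leq\; \frac{1}{\alpha}\cdot\frac{M/c_0}{nc_0} \;=\; \frac{M}{\alpha n c_0^2} \;\longrightarrow\; 0.
\]

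There is no real obstacle here; the only subtlety worth flagging is the positivity of $c_0$. The assumption $\inf_i\{\ex_\alpha(X_i) - \E[X_i]\} > 0$ is doing essential work: without a uniform lower bound on the gap between $\ex_\alpha(X_i)$ and $\E[X_i]$ one could not conclude $a \gtrsim n$, and a sequence of $X_i$ becoming more and more concentrated near constants would potentially violate the limit. Everything else is routine bookkeeping with Chebyshev and Jensen.
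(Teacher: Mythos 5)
Your proof is correct, but it takes a genuinely different route from the paper's. The paper works directly with the infimum defining $\alpha^*$: after centering, it invokes the $L^2$-law of large numbers to get $S_n/n\to 0$ in $L^2$, then uses the $L^1$-continuity of expectiles (via \citet[Corollary 7.10]{R13}) to conclude $\ex_\beta(S_n/n)\to 0$ for each fixed $\beta$, and finally observes that $\alpha^*\le \inf\{\beta:\ex_\beta(S_n/n)\le \epsilon\}\to 0$ with $\epsilon=\inf_i\{\ex_\alpha(X_i)-\E[X_i]\}$. You instead start from the closed-form representation in Theorem \ref{th:var}, and bound the numerator by the layer-cake formula plus Chebyshev, $\E[(Z-a)_+]\le \mathrm{var}(Z)/a\le M/c_0$, and the denominator by Jensen, $\E[|Z-a|]\ge a\ge nc_0$; all steps check out (in particular $a\ge nc_0>0$ ensures the Chebyshev integral and the ratio are well defined). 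What your approach buys is that it is entirely elementary --- no appeal to continuity properties of expectiles or to the LLN --- and, more substantively, it produces an explicit nonasymptotic bound $\mathrm{DQ}_\alpha^{\ex}(X_1,\dots,X_n)\le M/(\alpha n c_0^2)$, i.e.\ an $O(1/n)$ convergence rate, which the paper's soft argument does not provide. What the paper's approach buys is a template that extends verbatim to any coherent family for which the relevant $L^1$-continuity holds (indeed it is modeled on the analogous ES result in \cite{HLW23}), whereas your argument is tied to the specific acceptance-set formula for expectiles. Your closing remark about the role of the hypothesis $c_0>0$ is also accurate: it is exactly what forces $a$ to grow linearly in $n$.
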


The special case of Proposition \ref{uncorrelated} occurs when $X_1, X_2,\ldots$ are iid. For this case, by Theorem 2.2.3 of   \cite{D19}, $X_i$ is non-degenerate  with  $\E|X_i| < \infty$ for $i\in [n]$ is sufficient for the above conclusion.

Below, we provide an example to show that  DQs based on VaR and ES can exhibit significant jumps when analyzing a portfolio that follows a Bernoulli distribution.  In contrast, $\mathrm{DQ}^{\ex}_\alpha$, incorporates information from the entire distribution, leading to more stable assessments of diversification. 
\begin{example}\label{ex:stable}
For  $i\in[n]$,  let $X_i$  be iid Bernoulli-distributed losses with parameter $ p=0.1$.  We can compute $$
\ex_\alpha(X_i) = \frac{(1-\alpha)p}{\alpha + p(1 - 2\alpha)}. 
$$
For $n=2$, we have 
$$
\ex_\beta(X_1 + X_2) =\left\{
\begin{aligned} &\frac{2p\beta + 2p^2 - 4\beta p^2}{p^2 - 2\beta p^2 + \beta},  &\beta \leq {p^2}/{(1 - 2p + 2p^2)},\\&\frac{2p - 2\beta p}{2p-p^2 - 4\beta p + 2p^2\beta+\beta},&\beta > {p^2}/{(1 - 2p + 2p^2)}.\end{aligned}\right.
$$
Then  $\alpha^*$ can be solved as
$$
\alpha^*_\ex =\left\{\begin{aligned} &\frac{p\alpha}{1-2\alpha(1-p)}, &\alpha\leq p,\\ & \frac{\alpha-p+p^2-\alpha p^2}{2p\alpha+1-3p+2p^2(1-\alpha)},& \alpha > p.\end{aligned}\right.$$  
 In contrast,  we have $\alpha^*_\VaR=0.19\id_{\{\alpha>0.1\}} $ and     $\alpha^*_\ES=\alpha/(20-100\alpha) \id_{\{0.1<\alpha\leq 18/95\}}+\alpha \id_{\{\alpha>18/95\}}.$ The results are shown in Figure \ref{fig:Bern2}. We also perform  an analysis for $
n=10$, which is presented in Figure \ref{fig:Bern10}. We can see that when $\alpha$ is less than $0.1$, both $\mathrm{DQ}_\alpha^{\mathrm{VaR}}$ and $\mathrm{DQ}_\alpha^{\mathrm{ES}}$ are zero, and as $\alpha$ increases, they provide no further useful information. However, the expectile-based DQ steadily increases with $\alpha$ for $\alpha\in(0,0.1)$.

The same issue appears in the empirical study. For example, if $\alpha =0.01$ and the dataset has less than $100$ data points, then VaR and ES are difficult to interpret, and they lead to DQ being $0$ (see Remark \ref{rem:small-sample} in Section \ref{sec:4}). In contrast, expectiles allow for the use of a broader dataset, yielding more stable results when we apply it to real data. 

\begin{figure}[t]
\centering
 \includegraphics[height=5cm]{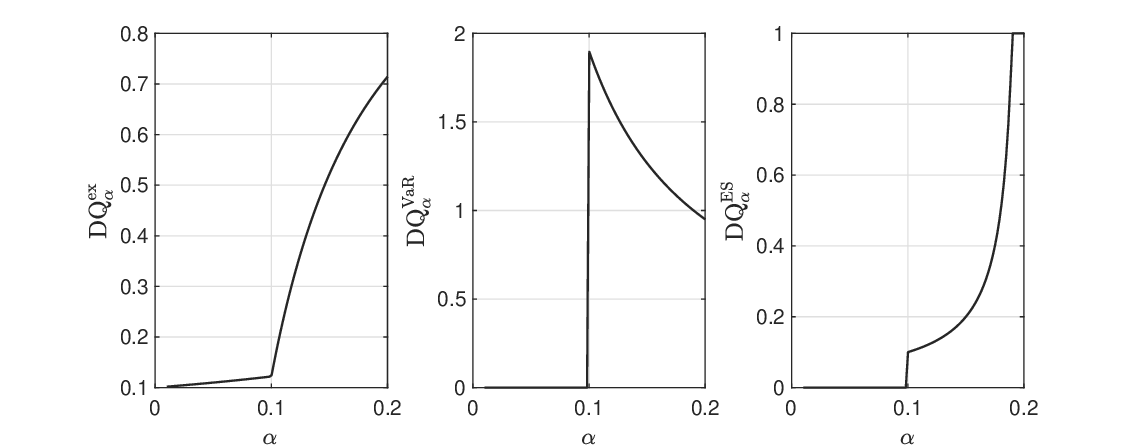}
 \captionsetup{font=small}
\caption{The values of DQ when $n=2$}\label{fig:Bern2}

\end{figure}

\begin{figure}[t]
\centering
 \includegraphics[height=5cm]{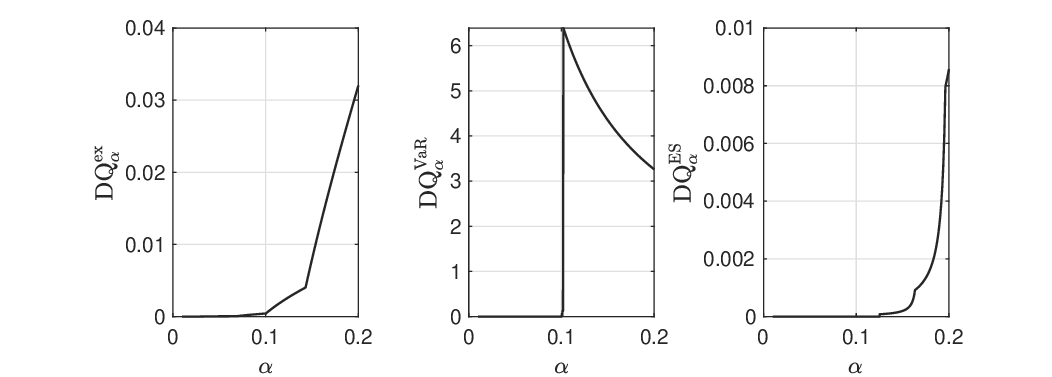}
 \captionsetup{font=small}
\caption{The values of DQ when $n=10$}\label{fig:Bern10}
\end{figure}
\end{example}

\section{Portfolio selection}\label{sec:4}

In this section, we analyze the portfolio selection problem for assets with
loss vector  $\mathbf{X}\in\X^n$ by selecting portfolio weights  $\mathbf w= (w_{1},  \dots, w_{n}) \in \Delta_{n}$  to minimize DQ based expectiles, where  $$  \Delta_{n}:=\left\{\mathbf{w} \in[0,1]^{n}: w_{1} +\dots+w_{n}=1\right\}.$$ 

We write $\mathbf w \odot \mathbf X=\left(w_1X_1,\dots,w_nX_n\right)$ as the portfolio loss vector with the weight $\mathbf w$.
For the portfolio $\mathbf w \odot \mathbf X$, the total loss is  $\mathbf{w}^\top \mathbf{X}$. For the portfolio selection problem, we now treat  
 $\mathrm{DQ}^\ex_\alpha(\mathbf w \odot \mathbf X)$ as a function of the portfolio weight $\mathbf w$. 
We focus on the following optimal diversification problem 
\begin{equation}\label{eq:optimal_DQ}
\min _{\mathbf{w} \in \Delta_n}  {\rm DQ}^{\ex}_{\alpha}(\mathbf w \odot \mathbf X).\end{equation}

\subsection{Quasi-convexity and pseudo-convexity}
 Denote by  $\mathbf{x}_{\alpha}^{\ex}=(\ex_{\alpha}(X_{1}), \ldots, \ex_{\alpha}(X_{n}))$, which   does not depend on the decision variable $\mathbf{w}.$   
Recall that 
 $$ {\rm DQ}^{\ex}_{\alpha}(\mathbf w \odot \mathbf X)= \frac{1}\alpha    \inf\left\{\beta \in (0,1) :  \ex_{\beta}\left(\sum_{i=1}^n w_iX_i\right) \le \mathbf w^\top  \mathbf{x}_{\alpha}^{\ex} \right\}.$$

 \citet[Theorem 2]{HLW24} showed that, for any decreasing family $\rho=(\rho_\alpha)_{\alpha\in I}$ of coherent risk measures, $\mathbf w \mapsto \mathrm{DQ}_\alpha^{\rho}(\mathbf w \odot \mathbf X)$  is quasi-convex, meaning that 
 $$
 \mathrm{DQ}_\alpha^{\rho}((\lambda\mathbf w+(1-\lambda)\mathbf v )\odot \mathbf X) \le \max
 \left\{\mathrm{DQ}_\alpha^{\rho}(\mathbf w \odot \mathbf X), \mathrm{DQ}_\alpha^{\ex}(\mathbf v \odot \mathbf X)\right\}
 $$
 for all $\lambda \in[0,1]$, $\mathbf w,\mathbf v\in \Delta_n$ and $\mathbf X\in \X^n$.
 Since expectiles are coherent for $\alpha \in (0,1/2)$, $\mathbf w \mapsto \mathrm{DQ}_\alpha^{\ex}(\mathbf w \odot \mathbf X)$ is quasi-convex on $\Delta_n$ for $\alpha \in(0,1/2)$.
 Note that since  $\mathrm{DQ}_\alpha^{\ex}$ is scale invariant, 
optimization of $\mathrm{DQ}_\alpha^{\ex}$ over $\Delta_n$ can be equivalently  formulated as optimization 
over $\R^n_+\setminus\{\mathbf 0\}$.
The next proposition shows that $\mathrm{DQ}_\alpha^{\ex}$ is also quasi-convexity on $\R^n_+$. 
 \begin{proposition}\label{quasi-convex}
     Let $\alpha \in (0, 1/2)$. The mapping $\mathbf w \mapsto \mathrm{DQ}_\alpha^{\ex}$ 
     is quasi-convex on $\R^n_+$.
 \end{proposition}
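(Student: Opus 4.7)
The plan is to reduce the claim to the already-known quasi-convexity on the simplex $\Delta_n$ via scale invariance [SI] of $\mathrm{DQ}_\alpha^{\ex}$. The paper has noted that quasi-convexity on $\Delta_n$ follows from \citet[Theorem 2]{HLW24} together with the fact that $\ex_\alpha$ is coherent for $\alpha \in (0, 1/2)$, so that tool is already on the table. The only real content of Proposition \ref{quasi-convex} beyond the simplex case is that the cone $\mathbb{R}^n_+ \setminus \{\mathbf{0}\}$ is also closed under a natural rescaled convex combination.

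Concretely, I would argue as follows. Fix $\mathbf{w}, \mathbf{v} \in \mathbb{R}^n_+ \setminus \{\mathbf{0}\}$ and $\lambda \in [0,1]$, and set $s_w = \sum_{i=1}^n w_i > 0$, $s_v = \sum_{i=1}^n v_i > 0$. Let $\tilde{\mathbf{w}} = \mathbf{w}/s_w$ and $\tilde{\mathbf{v}} = \mathbf{v}/s_v$, which lie in $\Delta_n$. By [SI], $\mathrm{DQ}_\alpha^{\ex}(\mathbf{w} \odot \mathbf{X}) = \mathrm{DQ}_\alpha^{\ex}(\tilde{\mathbf{w}} \odot \mathbf{X})$ and similarly for $\mathbf{v}$. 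Now define the positive scalar $K = \lambda s_w + (1-\lambda) s_v$ and the mixing weight $\mu = \lambda s_w / K \in [0,1]$. A direct computation gives
\begin{equation*}
\lambda \mathbf{w} + (1-\lambda) \mathbf{v} \;=\; K \bigl(\mu \tilde{\mathbf{w}} + (1-\mu) \tilde{\mathbf{v}}\bigr),
\end{equation*}
where $\mu \tilde{\mathbf{w}} + (1-\mu) \tilde{\mathbf{v}} \in \Delta_n$. Applying [SI] once more and then the simplex-quasi-convexity yields
\begin{equation*}
\mathrm{DQ}_\alpha^{\ex}\!\bigl((\lambda \mathbf{w} + (1-\lambda) \mathbf{v}) \odot \mathbf{X}\bigr)
 = \mathrm{DQ}_\alpha^{\ex}\!\bigl((\mu \tilde{\mathbf{w}} + (1-\mu) \tilde{\mathbf{v}}) \odot \mathbf{X}\bigr)
 \le \max\{\mathrm{DQ}_\alpha^{\ex}(\tilde{\mathbf{w}} \odot \mathbf{X}),\, \mathrm{DQ}_\alpha^{\ex}(\tilde{\mathbf{v}} \odot \mathbf{X})\},
\end{equation*}
which equals $\max\{\mathrm{DQ}_\alpha^{\ex}(\mathbf{w} \odot \mathbf{X}), \mathrm{DQ}_\alpha^{\ex}(\mathbf{v} \odot \mathbf{X})\}$ by [SI], establishing quasi-convexity.

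The boundary where $\mathbf{w} = \mathbf{0}$ or $\mathbf{v} = \mathbf{0}$ is a nuisance but not a genuine obstacle: if, say, $\mathbf{v} = \mathbf{0}$ and $\lambda \in (0,1]$, then $\lambda \mathbf{w} + (1-\lambda)\mathbf{v} = \lambda \mathbf{w}$ and scale invariance again gives the needed inequality; by Proposition \ref{th:ex-01}(ii), $\mathrm{DQ}_\alpha^{\ex}(\mathbf{0} \odot \mathbf{X}) = 0$ (as the aggregate is the constant $0$), so any inequality involving $\mathbf{0}$ holds trivially. The main step that deserves care is confirming that the coefficient identity $\lambda \mathbf{w} + (1-\lambda)\mathbf{v} = K(\mu \tilde{\mathbf{w}} + (1-\mu)\tilde{\mathbf{v}})$ is valid coordinate-wise, which is a short algebraic check; the real conceptual work was already done in the simplex version, so this proposition is essentially a free corollary of [SI] plus the earlier quasi-convexity on $\Delta_n$.
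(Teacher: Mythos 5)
Your proposal is correct and follows essentially the same route as the paper's proof: both reduce to the known quasi-convexity on $\Delta_n$ by normalizing $\mathbf w$ and $\mathbf v$ to the simplex, rewriting $\lambda\mathbf w+(1-\lambda)\mathbf v$ as a positive multiple of a convex combination of the normalized vectors, and applying scale invariance before and after, with the zero-vector case handled separately via $\mathrm{DQ}_\alpha^{\ex}(\mathbf 0\odot\mathbf X)=0$. The only differences are notational (your $K$ and $\mu$ versus the paper's explicit coefficients).
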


Our next result further shows that $\mathbf w \mapsto \mathrm{DQ}_\alpha^{\ex}(\mathbf w \odot \mathbf X)$ is pseudo-convex in $\mathbf{w}$.
Pseudo-convexity is stronger than quasi-convexity, and it is a useful property in optimization, allowing for gradient descent algorithms. 
A differentiable function $f: \mathbb{T} \rightarrow \mathbb{R}$, where $\mathbb{T} \subseteq \mathbb{R}^n$ is an open set, is said to be \emph{pseudo-convex} if 
\begin{equation}
\label{eq:pseudo}
\text{for all } \mathbf{x}, \mathbf{y} \in \mathbb{T}: \quad f(\mathbf{y}) < f(\mathbf{x}) \Longrightarrow \nabla f(\mathbf{x}) \cdot (\mathbf{y} - \mathbf{x}) < 0.
\end{equation}
Here, $\nabla f$ is the gradient of $f$, that is, $$\nabla f=\left(\frac{\partial f}{\partial x_1}, \ldots, \frac{\partial f}{\partial x_n}\right).$$
We can also speak of pseudo-convexity of $f$ on a  (not necessarily open) subset $\mathbb T'$ of $\mathbb T$, which only requires
\eqref{eq:pseudo} to hold on $\mathbb T'$. 
For instance, if $f$ is pseudo-convex on $(0,\infty)^n$, then it is also pseudo-convex on the interior of $\Delta_n$.

For pseudo-convex functions,    any local minimum is also a global minimum (\cite{M75}). 

\begin{theorem} \label{prop:pseudo}Let $\alpha \in (0, 1/2)$. The mapping $f:\mathbf{w} \mapsto {\rm DQ}^{\ex}_{\alpha}(\mathbf{w} \odot \mathbf{X})$ on $(0,\infty)^n$ is  pseudo-convex. 
\end{theorem}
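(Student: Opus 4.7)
The plan is to exploit the alternative formula from Theorem~\ref{th:var} to reduce pseudo-convexity of $f(\mathbf w) := \mathrm{DQ}^{\ex}_\alpha(\mathbf w \odot \mathbf X)$ to pseudo-concavity of a simpler ratio, using a linearity that arises from the positive homogeneity of the expectile. Set $\mathbf Y = \mathbf X - \mathbf x_\alpha^{\ex}$, $F(\mathbf w) = \mathbb E[(\mathbf w^\top \mathbf Y)_+]$, and $G(\mathbf w) = \mathbb E[(\mathbf w^\top \mathbf Y)_-]$. Since expectiles are positively $1$-homogeneous, $\sum_{i=1}^n \ex_\alpha(w_i X_i) = \mathbf w^\top \mathbf x_\alpha^{\ex}$, and Theorem~\ref{th:var} gives
\begin{equation*}
f(\mathbf w) = \frac{F(\mathbf w)}{\alpha(F(\mathbf w) + G(\mathbf w))} = \phi(\Omega(\mathbf w)),
\end{equation*}
where $\Omega := F/G$ and $\phi(r) = r/(\alpha(1+r))$ is smooth and strictly increasing on $(0,\infty)$. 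Both $F$ and $G$ are convex and positively $1$-homogeneous. Since $\phi$ is a strictly increasing smooth transformation, a chain-rule argument reduces pseudo-convexity of $f$ to pseudo-convexity of $\Omega$.

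The critical observation is that $F - G = \mathbb E[\mathbf w^\top \mathbf Y] = \mathbf w^\top m$ with $m = \mathbb E[\mathbf Y]$ is \emph{linear} in $\mathbf w$. Consequently,
\begin{equation*}
\Omega(\mathbf w) = 1 - \frac{L(\mathbf w)}{G(\mathbf w)}, \qquad L(\mathbf w) := \sum_{i=1}^n w_i (\ex_\alpha(X_i) - \mathbb E[X_i]).
\end{equation*}
For $\alpha \in (0,1/2)$ the paper notes $\ex_\alpha(X_i) \geq \mathbb E[X_i]$, so $L$ is linear and nonnegative on $(0,\infty)^n$. Pseudo-convexity of $\Omega$ thus reduces to pseudo-concavity of $L/G$: a ratio of a nonnegative affine function over a positive convex function. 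This is the conceptual heart of the argument; without the linearity of $F-G$, the ratio $F/G$ would be convex-over-convex and pseudo-convexity would fail in general.

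To verify pseudo-concavity of $L/G$, suppose $(L/G)(\mathbf y) > (L/G)(\mathbf x)$, which (since $G > 0$) is equivalent to $L(\mathbf y) G(\mathbf x) > L(\mathbf x) G(\mathbf y)$. Using $\nabla L \cdot (\mathbf y - \mathbf x) = L(\mathbf y) - L(\mathbf x)$ from affinity of $L$, the convex subgradient inequality $\nabla G(\mathbf x) \cdot (\mathbf y - \mathbf x) \leq G(\mathbf y) - G(\mathbf x)$, and $L(\mathbf x) \geq 0$, a direct expansion yields
\begin{equation*}
G(\mathbf x)^2 \nabla(L/G)(\mathbf x) \cdot (\mathbf y - \mathbf x) \geq G(\mathbf x)(L(\mathbf y) - L(\mathbf x)) - L(\mathbf x)(G(\mathbf y) - G(\mathbf x)) = G(\mathbf x) L(\mathbf y) - L(\mathbf x) G(\mathbf y) > 0.
\end{equation*}
This closes the chain $L/G$ pseudo-concave $\Rightarrow$ $\Omega$ pseudo-convex $\Rightarrow$ $f$ pseudo-convex. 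The main technical obstacle is justifying differentiability of $F$ and $G$ on $(0,\infty)^n$, which holds under the mild nondegeneracy hypothesis $\mathbb P(\mathbf w^\top \mathbf Y = 0) = 0$ for $\mathbf w$ in the interior; the boundary case $L(\mathbf x) = 0$ is innocuous since then $L(\mathbf y) > 0$ is forced and the lower bound collapses to $G(\mathbf x) L(\mathbf y) > 0$.
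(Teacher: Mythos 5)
Your proof is correct, and it reaches the conclusion by a more modular route than the paper's. The paper starts from the same formula $f(\mathbf w)=\mathbb{E}[(\mathbf w^\top \mathbf Y)_+]/\bigl(\alpha(2\mathbb{E}[(\mathbf w^\top \mathbf Y)_+]-\mathbb{E}[\mathbf w^\top \mathbf Y])\bigr)$, computes $\partial f/\partial w_i$ explicitly, observes $\nabla f(\mathbf w)\cdot \mathbf w=0$ (degree-$0$ homogeneity), and then verifies $\nabla f(\mathbf w)\cdot(\mathbf v-\mathbf w)<0$ by combining the cross-multiplied hypothesis $f(\mathbf v)<f(\mathbf w)$ with the inequality $\mathbb{E}[(\mathbf v^\top\mathbf Y)_+]\ge \mathbb{E}[\mathbf v^\top\mathbf Y\,\id_{\{\mathbf w^\top\mathbf Y>0\}}]$ and the sign $\mathbb{E}[\mathbf w^\top\mathbf Y]<0$. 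You instead factor $f=\phi(1-L/G)$ with $\phi$ strictly increasing and reduce everything to the classical fact that a nonnegative affine function over a positive convex function is pseudo-concave, proved via the quotient rule and the convex gradient inequality for $G$. The two arguments use the same three ingredients (the Theorem~\ref{th:var} formula, linearity of $\mathbb{E}[\mathbf w^\top\mathbf Y]$ with the sign coming from $\ex_\alpha(X_i)\ge\mathbb{E}[X_i]$, and the subgradient inequality for the convex $1$-homogeneous positive/negative-part functionals), but your packaging isolates \emph{why} the result holds --- the linearity of $F-G$ --- which the paper's direct computation obscures; your remark that convex-over-convex would fail in general is a genuine insight absent from the paper. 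Two housekeeping points, neither of which is a gap relative to the paper: you should dispatch the degenerate case where $\mathbf X$ is a constant vector (then $G\equiv 0$ and your ratio is undefined, but $f\equiv 0$ is trivially pseudo-convex --- the paper does this explicitly), and your differentiability caveat $\p(\mathbf w^\top\mathbf Y=0)=0$ is exactly the implicit assumption the paper also makes when it writes $\nabla F(\mathbf w)=\mathbb{E}[\mathbf Y\,\id_{\{\mathbf w^\top\mathbf Y>0\}}]$, so you are no worse off there.
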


 Below we show that the optimization problem is similar to the mean-variance problem.
\begin{proposition}\label{thm:2}Fix $\alpha\in(0,1/2)$ and $\mathbf X \in \X^n$.
The optimization of $\mathrm{DQ}^\ex_\alpha(\mathbf w \odot \mathbf X)$ in \eqref{eq:optimal_DQ}  is equivalent to the following two optimization problems:
\begin{equation}\label{optimal}
 \mathrm{minimize}~ \frac{\mathbb{E}[(\mathbf  w^\top(\mathbf X- \mathbf{x}_{\alpha}^{\ex} ))_+]}{\mathbb{E}\left[ \mathbf  w^\top (\mathbf{x}_{\alpha}^{\ex} - \mathbf X)\right]} ~~~\mathrm{over~}\mathbf w \in \Delta_n\end{equation}
 and
 \begin{align}\label{eq:opt3} 
 \begin{aligned}
 &  \mathrm{minimize}~  \frac{\mathbb{E}\left[\left(\mathbf  w^\top(\mathbf X- \mathbf{x}_{\alpha}^{\ex} )\right)_+\right]}{m}~~~\mathrm{over~}\mathbf{w} \in \Delta_n
  \\&\mathrm{subject ~to}~\mathbb{E}\left[\mathbf  w^\top(\mathbf{x}_{\alpha}^{\ex}-\mathbf X )\right]=m; ~m> 0.     
 \end{aligned}\end{align} 
\end{proposition}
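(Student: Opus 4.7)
The plan is to start from the closed-form expression for $\DQex$ in Theorem \ref{th:var} and reduce both target problems to monotone transformations of this ratio. First I would exploit the positive homogeneity of $\ex_\alpha$ on the nonnegative cone (valid because $\ex_\alpha$ is coherent for $\alpha<1/2$) to write $\sum_{i=1}^n \ex_\alpha(w_iX_i) = \mathbf w^\top \mathbf x_\alpha^{\ex}$ for every $\mathbf w\in\Delta_n$. Setting $Y := \mathbf w^\top(\mathbf X-\mathbf x_\alpha^{\ex})$, $a := \mathbb E[(Y)_+]$ and $s := \mathbb E[\mathbf w^\top(\mathbf x_\alpha^{\ex}-\mathbf X)]=-\mathbb E[Y]$, the identities $(y)_+-(y)_-=y$ and $(y)_++(y)_-=|y|$ give $\mathbb E[|Y|]=2a+s$, so Theorem \ref{th:var} yields
\begin{equation*}
\DQex(\mathbf w\odot \mathbf X)=\frac{1}{\alpha}\cdot\frac{a}{2a+s}.
\end{equation*}

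For the equivalence with \eqref{optimal}, I would observe that $\ex_\alpha(X_i)\geq \mathbb E[X_i]$ for $\alpha<1/2$, with equality iff $X_i$ is a.s.\ constant, so $s\geq 0$ with $s>0$ unless every $X_i$ having $w_i>0$ is degenerate (in which case $a=0$, both objectives vanish, and the case is trivial). Assuming $s>0$ and setting $r:=a/s\geq 0$, the algebraic identity $\frac{a}{2a+s}=\frac{r}{2r+1}$ together with the strict monotonicity of $r\mapsto r/(2r+1)$ on $[0,\infty)$ shows that minimizing $\DQex(\mathbf w\odot\mathbf X)$ over $\mathbf w\in\Delta_n$ is equivalent to minimizing $a/s$, which is exactly the objective in \eqref{optimal}.

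For the equivalence with \eqref{eq:opt3}, I would argue that introducing the auxiliary variable $m>0$ constrained to equal $s$ does not change the problem: the affine constraint $\mathbb E[\mathbf w^\top(\mathbf x_\alpha^{\ex}-\mathbf X)]=m$ pins $m$ to $\mathbf w^\top(\mathbf x_\alpha^{\ex}-\mathbb E[\mathbf X])$, so the joint optimization over $(\mathbf w,m)$ collapses to \eqref{optimal}. The practical value of \eqref{eq:opt3} is that, by treating $m$ as a normalization rather than as the scaling imposed by $\Delta_n$, the numerator becomes a convex function of $\mathbf w$ (piecewise linear in any empirical or scenario-based setting) and the constraint is affine, yielding a convex-programming (indeed linear-programming) formulation analogous to the mean--variance framework.

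The only real obstacle is to handle the degenerate case $s=0$, dispatched by the trivial observation above; the remainder is an algebraic identity combined with one elementary monotone-function observation, and I do not anticipate further difficulty.
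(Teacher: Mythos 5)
Your proposal is correct and follows essentially the same route as the paper's own proof: both start from the ratio formula of Theorem \ref{th:var}, rewrite $\mathrm{DQ}^{\ex}_\alpha(\mathbf w\odot\mathbf X)$ as $\frac{1}{\alpha}\cdot\frac{a}{2a+s}$ with $a=\mathbb E[(\mathbf w^\top(\mathbf X-\mathbf x_\alpha^{\ex}))_+]$ and $s=\mathbb E[\mathbf w^\top(\mathbf x_\alpha^{\ex}-\mathbf X)]>0$, reduce the problem to minimizing $a/s$ by a strictly monotone transformation, and obtain \eqref{eq:opt3} by treating $m$ as the value of the pinned affine constraint. Your explicit handling of the degenerate case $s=0$ is a minor (and harmless) refinement of the paper's argument, which assumes non-degeneracy implicitly.
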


\begin{remark}
We note that neither $\mathbf X \mapsto \DQex(\mathbf X)$ nor $\mathbf w \mapsto \DQex(\mathbf w \odot \mathbf X)$ is convex. As discussed in \citet[Appendix C.2]{HLW24}, convexity is not a desirable property for a diversification index. 
Quasi-convexity and pseudo-convexity of  $\mathbf w \mapsto \DQex(\mathbf w \odot \mathbf X)$ are desirable properties both for the interpretation of diversification and for feasibility in optimization.
\end{remark}

\subsection{Optimization from empirical sample}

Next, we propose efficient algorithms for optimizing $ \mathrm{DQ}^\ex_\alpha$ from empirical sample.  Assume we have data $\mathbf X^{(1)},\dots,\mathbf X^{(N)}$ sampled from $\mathbf X$ satisfying some ergodicity condition (being iid would be sufficient).
 Denote by $\widehat{\mathbf x}^{\ex}_\alpha=(\widehat \ex_{\alpha,1}(N),\dots,\widehat \ex_{\alpha, n}(N))$  the empirical estimator of $\mathbf x^{\ex}_\alpha$.  We apply the empirical estimator from   \cite{KZ17} and \cite{DGS19} to estimate $\mathbf x^{\ex}_\alpha$; that is
 $$\widehat {\ex}_{\alpha,i}(N)=\argmin_{x \in \R} \frac{1}{N}\sum_{j=1}^N\left((1-\alpha)(X^{(j)}_i-x)_+^2+\alpha(X^{(j)}_i-x)_-^2\right), ~~ i\in [n].$$

The optimization problem \eqref{optimal} can be solved by linear programming (LP), similar to the approach used in optimizing the Omega ratio problem; see e.g., \cite{KZCR14a} and \cite{GMOS16}. More specifically,  the empirical version of the problem \eqref{optimal}   can be  written as 
\begin{align*}
     \mbox{minimize ~~~}  & \frac{ \sum_{j=1}^N  (\mathbf w ^\top  (\mathbf X^{(j)}- \widehat{\mathbf x}^{\ex}_\alpha))_+}{    \sum_{j=1}^N \mathbf  w^\top  ( \widehat{\mathbf x}^{\ex}_\alpha-\mathbf X^{(j)}  )}~~~
     \\
~\text { subject to}~~~&  \sum_{i=1}^n w_i=1, 
~~w_i\geq0 \mbox{~for all $i\in [n]$}.
\end{align*}

Let  $\boldsymbol\mu=  \sum_{j=1}^N ( \widehat{\mathbf x}^{\ex}_\alpha-\mathbf X^{(j)})\in\R^n$. The problem can be written as  \begin{align*}
\mbox{~~~minimize~}  &~~ \frac{ z_1}{z}~~~    \\ \text {subject to}& ~\sum_{i=1}^n w_i=1,  ~~w_i\geq0  \mbox{~for all $i\in [n]$,}\\~~~~&\sum_{i=1}^n  w_i\mu_i =z,    \\&   \sum_{i=1}^n     w_i  ( X^{(j)}_i- \widehat{\ex}_{\alpha,i}(N)) = y_j  ,  ~~   \mbox{~for all $j\in [N]$,}  \\& \sum_{j=1}^N d_j =z_1,~  d_j \geq y_j, ~~d_j\geq0,~~  \mbox{~for all $j\in [N]$} . 
\end{align*}
 Since $\alpha <1/2$,  we have $\widehat {\ex}_{\alpha,i}> \sum_{j=1}^N X^{(j)}_i/N$; that is, $z>0$. In order to get a linear optimization problem, we further introduce variables $v=z_1/z$ and $v_0=1/z$.  Additionally, we divide all the constraints by $z$ and make the substitutions $\widetilde{d}_j=d_j / z$ and $\widetilde{y}_j=y_j / z$ for $j=1, \ldots, N$, as well as $\widetilde w_i=w_i / z$, for $i=1, \ldots, n$. Eventually, we get the following LP formulation to minimize $v$:
 \begin{equation}\label{LP}
\begin{aligned}
 \text {minimize}~~&~ v  \\ \text { subject to } &\sum_{i=1}^n \widetilde w_i=v_0, \quad v_0 \leq M, \quad \widetilde w_i \geq 0 \quad \text { for } i\in[n], \\
& \sum_{i=1}^n  \widetilde{w}_i\mu_i=1, \\
& \sum_{i=1}^n \widetilde w_i (X_i^{(j)}-\widehat \ex_{\alpha,i}(N))=\widetilde{y}_j,  ~~~~\text { for } j\in[N],\\
& \sum_{j=1}^N \widetilde{d}_j =v,  \widetilde{d}_j \geq  \widetilde{y}_j, ~\widetilde{d}_j \geq 0, ~~\text { for } j\in[N].
\end{aligned}
\end{equation}

Note that the optimization problem \eqref{eq:opt3} exhibits parallels to the mean-variance framework, but is not a convex problem.
The global optimality can be obtained by solving a series of convex problems under different $m>0$, which yields a Pareto frontier for $m$ and  $\mathbb{E}[(\mathbf  w^\top(\mathbf X- \mathbf{x}_{\alpha}^{\ex} ))_+]$. Thus, minimizing $\mathrm{DQ}^{\ex}_\alpha (\mathbf w \odot \mathbf{X}) $, subject to the constraint $\mathbf{w} \in \Delta_n$, corresponds to finding the line with the maximum slope that passes through the origin and a point on the frontier; see Figure \ref{fig:frontier}.
\begin{figure}[htb!]
\caption{The  frontier from the  minimization of  \eqref{eq:opt3}.}\label{fig:frontier}
\centering
 \includegraphics[height=5cm]{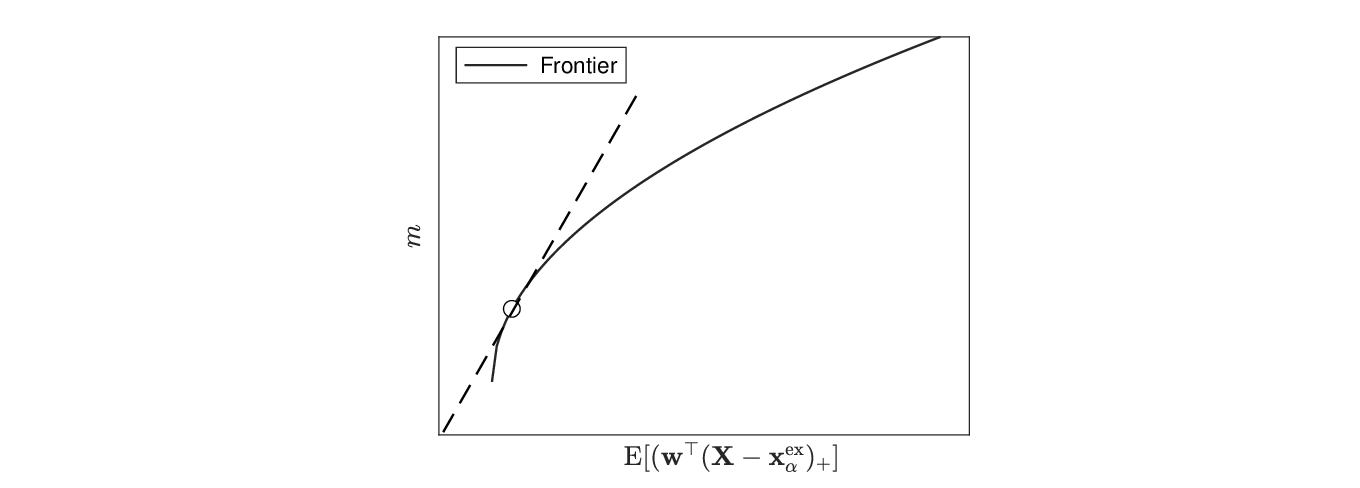}
 \captionsetup{font=small}
\end{figure}
The empirical version of  optimization problem \eqref{eq:opt3} is
\begin{equation} \label{eq:opt3_em}
\begin{aligned}
  &\mathrm{minimize}~  {\frac{1}{N m}\sum_{j=1}^N \left[\left(\mathbf  w^\top(\mathbf X^{(j)}- \widehat {\mathbf{x}}_{\alpha}^{\ex})\right)_+\right]},\\
  &~~\mathrm{subject ~to}~\frac{1}{N}\sum_{j=1}^N\left[\mathbf  w^\top(\widehat{\mathbf{x}}_{\alpha}^{\ex}-\mathbf X^{(j)} )\right]=m, ~m> 0,~\mathbf{w} \in \Delta_n,
   \end{aligned}
\end{equation} 
We can solve the empirical optimization problem with modern optimization programs such as the \texttt{fmincon} function in \texttt{Matlab}.
In Section \ref{sec:6}, we will apply \eqref{LP} to real data to explore the optimal investment strategies for stocks. It can be easily shown that the optimization results from \eqref{LP} and \eqref{eq:opt3_em} are equivalent.

\begin{remark}    \label{rem:small-sample}
For a given set of data $\mathbf X^{(1)}, \dots, \mathbf X^{(N)}$, the empirical VaR is defined as the sample quantile for $i\in [n]$, which is given by
$$\widehat{\VaR}_\alpha^i=X_i^{[k]}, \alpha \in \left(\frac{N-k}{N},\frac{N-k+1}{N}\right]$$
where $X_i^{[1]} \le \dots \le X_i^{[N]} $ are the order statistics of the data $X_i^{(1)}, \dots, X_i^{(N)}$.
The empirical ES is given by 
$$\widehat{\ES}_\alpha^i=\frac{1}{\alpha}\int_0^\alpha \widehat{\VaR}_p \d p.$$
It is clear that $\widehat{\VaR}_\alpha^i=\widehat{\ES}_\alpha^i=X_i^{[N]}$ for $\alpha \in (0,1/N)$ and $i\in [n]$. 
Therefore, $\sum_{i=1}^n X^{(j)}_i\le \sum_{i=1}^n \widehat{\VaR}_\alpha^i=\sum_{i=1}^n\widehat{\ES}_\alpha^i$ for all $j \in [N]$. Following Theorem 4 in \cite{HLW24}, the empirical $\DQVaR$ and $\DQES$ are 0.
    The same does not apply to $\DQex$ because $\widehat {\ex}_{\alpha,i}$ is strictly monotone in $\alpha$ on $(0,1/2)$ even if the data only have few points.
\end{remark}

\section{Elliptical and MRV models}\label{sec:5}

In this section, we focus on two widely-used models in finance and insurance, namely, 
  elliptical and  multivariate regular variation (MRV)  distributions.   Elliptical distributions, which encompass multivariate normal and t-distributions as specific instances, serve as fundamental tools in quantitative risk management (\cite{MFE15}). Meanwhile, the MRV model plays a significant role in Extreme Value Theory, particularly for analyzing portfolio diversification, as illustrated in works such as \cite{MR10}, \cite{ME13}, and \cite{BMWW16}. 
  For DQ based on ES and VaR, these two classes of distributions are studied by \cite{HLW23}.

\subsection{Elliptical models}
  We first define elliptical distributions. 
A random vector $\mathbf{X}$ is \emph{elliptically distributed} if
	its characteristic function can be written as
	$$
	\begin{aligned}
		\psi(\mathbf{t}) =\mathbb{E}\left[\exp \left(\texttt{i} \mathbf{t}^\top \mathbf{X}\right)\right] & 
		=\exp \left(\texttt{i} \mathbf{t}^\top \boldsymbol{\mu}\right) \tau\left(\mathbf{t}^\top \Sigma \mathbf{t}\right),
	\end{aligned}
	$$
	for some  $\boldsymbol{\mu}\in \mathbb{R}^{n}$, positive semi-definite matrix $ \Sigma\in \R^{n\times n}$,
	and $\tau: \mathbb{R}_{+} \rightarrow \mathbb{R}$ called the characteristic generator.
	We denote this distribution
	by $ \mathrm{E}_{n}(\boldsymbol{\mu}, \Sigma, \tau).
	$ We will assume that $\Sigma$ is not a  matrix of zeros. 	 Each marginal distribution of an elliptical distribution is a one-dimensional elliptical distribution with the same characteristic generator. For  a positive semi-definite matrix $\Sigma$,
	we write $\Sigma=(\sigma_{ij})_{n\times n}$, $\sigma_i^2=\sigma_{ii}$, and $\boldsymbol \sigma=(\sigma_1,\dots,\sigma_n)$.  
We define	
	\begin{equation}\label{eq:k}k_\Sigma= \frac {\sum_{i=1}^n 
			\left(\mathbf{e}^\top_i \Sigma \mathbf{e}_i \right)^{1/2}} {\left( \mathbf{1}^\top \Sigma \mathbf{1}\right)^{1/2}  } 
	=\frac{\sum_{i=1}^n\sigma_{i} }{ \left(\sum_{i, j}^n \sigma_{ij}\right)^{1/2} }
	\in [1,\infty),\end{equation}
where   $\mathbf{1}=(1,\dots,1)\in\R^n$ and  $ \mathbf e_{1},\dots, \mathbf e_{n}$ are the column vectors of the $n\times n$ identity matrix $I_n$. Moreover, $k_\Sigma = 1$ if and only if $\Sigma =\boldsymbol \sigma \boldsymbol \sigma^\top  $, which means that $\mathbf X\sim \mathrm{E}_n( \boldsymbol{\mu}, \Sigma,\tau)$ is comonotonic.

In the next proposition, we give  the explicit formula for   DQ based on expectiles.  
\begin{proposition}
	\label{prop:ellip}
	Suppose that $\mathbf X \sim  \mathrm{E}_{n}(\boldsymbol{\mu}, \Sigma, \tau)$.  We have, for $\alpha \in (0,1/2)$,
	\begin{equation} \label{DQ_elli}{\rm DQ}_\alpha^{\ex}(\mathbf X)
	=\frac{\E[(Y-k_\Sigma \ex_\alpha(Y))_+]}{\alpha\E[\vert Y-k_\Sigma \ex_\alpha(Y)\vert ]}=
		\frac{1-\widetilde F_Y  (k_\Sigma \ex_{\alpha}(Y)  )}{\alpha},\end{equation}
		where $ Y \sim \mathrm{E}_1(0,1,\tau)$ with  distribution function  $F_Y$, and  $ \widetilde F_Y$ is  defined by \eqref{t_F} by replacing $S_n$ with $Y$. 
		Moreover,
			 $k_\Sigma \mapsto {\rm DQ}_\alpha^{\ex}(\mathbf X)$ is decreasing for $\alpha \in (0,1/2)$, and  ${\rm DQ}_\alpha^{\ex}(\mathbf X)=1$  as $k_\Sigma=1$. \end{proposition}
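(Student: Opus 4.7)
\medskip

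\noindent\textbf{Proof proposal for Proposition~\ref{prop:ellip}.}
The plan is to exploit the well-known stability of elliptical distributions under linear combinations: if $\mathbf{X} \sim \mathrm{E}_n(\boldsymbol{\mu}, \Sigma, \tau)$ and $\mathbf{a} \in \mathbb{R}^n$, then $\mathbf{a}^\top \mathbf{X} \laweq \mathbf{a}^\top \boldsymbol{\mu} + (\mathbf{a}^\top \Sigma \mathbf{a})^{1/2} Y$ with $Y \sim \mathrm{E}_1(0,1,\tau)$. In particular, $X_i \laweq \mu_i + \sigma_i Y$ and $S := \sum_{i=1}^n X_i \laweq \mathbf{1}^\top \boldsymbol{\mu} + (\mathbf{1}^\top \Sigma \mathbf{1})^{1/2} Y$. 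Since expectiles are translation invariant and positively homogeneous (being coherent), we get $\ex_\alpha(X_i) = \mu_i + \sigma_i \ex_\alpha(Y)$, so that with $t := \sum_{i=1}^n \ex_\alpha(X_i)$,
\begin{equation*}
S - t \laweq (\mathbf{1}^\top \Sigma \mathbf{1})^{1/2}\bigl(Y - k_\Sigma \ex_\alpha(Y)\bigr),
\end{equation*}
upon factoring out $(\mathbf{1}^\top \Sigma \mathbf{1})^{1/2}$ and recognizing $k_\Sigma$ from \eqref{eq:k}.

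Plugging this into the formula \eqref{eq:var-alter} of Theorem~\ref{th:var}, the positive scalar $(\mathbf{1}^\top \Sigma \mathbf{1})^{1/2}$ factors out of both the numerator and denominator and cancels, yielding the first equality in \eqref{DQ_elli}. The second equality then follows by rewriting the ratio $\E[(Y-k_\Sigma \ex_\alpha(Y))_+]/\E[|Y - k_\Sigma \ex_\alpha(Y)|]$ as $1 - \widetilde F_Y(k_\Sigma \ex_\alpha(Y))$ via direct algebraic manipulation of the definition \eqref{t_F}; equivalently, this is the representation \eqref{eq:DQ_ex} applied to the scaled and centered variable.

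For the monotonicity in $k_\Sigma$, rewrite
\begin{equation*}
\alpha \, {\rm DQ}_\alpha^{\ex}(\mathbf X) = \frac{1}{1 + \E[(Y-k_\Sigma \ex_\alpha(Y))_-]/\E[(Y-k_\Sigma \ex_\alpha(Y))_+]},
\end{equation*}
and observe that, for $\alpha \in (0,1/2)$, the random variable $Y$ is non-degenerate with mean $0$, so $\ex_\alpha(Y) > \E[Y] = 0$ by the discussion following \eqref{eq:2}. Hence the threshold $k_\Sigma \ex_\alpha(Y)$ strictly increases with $k_\Sigma$, which makes $\E[(Y - k_\Sigma \ex_\alpha(Y))_+]$ decreasing and $\E[(Y - k_\Sigma \ex_\alpha(Y))_-]$ increasing in $k_\Sigma$. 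Consequently the displayed ratio, and hence ${\rm DQ}_\alpha^{\ex}(\mathbf X)$, is decreasing in $k_\Sigma$. (Alternatively, one may invoke that the Omega ratio $\Omega_Y(\cdot)$ is decreasing in its threshold, as used in Remark~\ref{rem:omega}.)

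Finally, for the boundary case $k_\Sigma = 1$, substitute $t = \ex_\alpha(Y)$ into the defining first-order condition \eqref{eq:ex} applied to $Y$: $(1-\alpha)\E[(Y-\ex_\alpha(Y))_+] = \alpha\E[(Y-\ex_\alpha(Y))_-]$. Writing $a := \E[(Y-\ex_\alpha(Y))_+]$ and $b := \E[(Y-\ex_\alpha(Y))_-]$, this gives $a/(a+b) = \alpha$, so the first equality in \eqref{DQ_elli} returns $(1/\alpha)\cdot\alpha = 1$. I expect the verification of the equivalence between the two expressions in \eqref{DQ_elli} via \eqref{t_F} to be the only bookkeeping step that requires some care, but no genuine obstacle; everything else is a direct consequence of the scale/translation equivariance of expectiles combined with Theorem~\ref{th:var}.
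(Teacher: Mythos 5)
Your proposal is correct and follows essentially the same route as the paper: both use the stability of elliptical distributions under linear combinations together with the translation invariance and positive homogeneity of expectiles to reduce $S - t$ to a positive multiple of $Y - k_\Sigma \ex_\alpha(Y)$, then apply the representation from Theorem~\ref{th:var}, and conclude monotonicity in $k_\Sigma$ from $\ex_\alpha(Y) > 0$. Your handling of the $k_\Sigma = 1$ case via the first-order condition \eqref{eq:ex} is a minor (and clean) variant of the paper's appeal to \eqref{trans_p}, but it is not a genuinely different argument.
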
	

Recall from Proposition 2 of \cite{HLW23} that  for  $\mathbf X \sim  \mathrm{E}_{n}(\boldsymbol{\mu}, \Sigma, \tau)$,  we have, for $\alpha \in (0,1)$,
	$${\rm DQ}_\alpha^{\VaR}(\mathbf X)
	=
		\frac{1-F_Y  (k_\Sigma \VaR_{\alpha}(Y)  )}{\alpha}
		~~\text{and}~~
		{\rm DQ}_\alpha^{\ES}(\mathbf X)=
		\frac{1-    \widehat F_Y (k_\Sigma \ES_{\alpha}(Y)  )}{\alpha},		
		$$
		where    $ \widehat F_Y$ is the superquantile transform of  $F_Y$.\footnote{The \emph{superquantile transform}   of a distribution $F$ with finite mean
	is  a distribution $\widehat F_Y$ with quantile function $p\mapsto \ES_{1-p}(Y)$ for $p\in(0,1)$, where $Y\sim F_Y$.}
Contrasting DQ with DR, for $\mathbf X \sim  \mathrm{E}_{n}(\mathbf{0}, \Sigma, \tau)$ and $\alpha\in(0,1/2)$, we have
	\begin{equation}
		\label{eq:DRellip}
		{\rm DR}^{\ex_\alpha}(\mathbf X)=\frac{\ex_{\alpha}(\sum_{i=1}^n X_i)}{\sum_{i=1}^n \ex_{\alpha}(X_i)}=\frac{ \left(\sum_{i, j}^n \sigma_{ij}\right)^{1/2}   \ex_\alpha (Y) 
		}{\sum_{i=1}^n \sigma_i \ex_\alpha (Y)} = \frac{1}{k_\Sigma}={\rm DR}^{\VaR_\alpha}(\mathbf X)={\rm DR}^{\ES_\alpha}(\mathbf X).
	\end{equation}  
Thus, the three DR indices have the same value. 
For the elliptical models,   DQ depends on the selection of risk measures, effectively capturing tail risk across different distributions along with the selected $\alpha$, while  DR does not depend on $\alpha$, the distribution of $\mathbf X$, or the specific risk measure employed; it only depends on $\Sigma$. Consequently,   DQ offers a more nuanced understanding of how diversification impacts different segments of a distribution, extending beyond the conventional mean-variance framework.

	\begin{example}

We assume that  $\mathbf{X}$ follows a multivariate normal distribution $\mathrm{N}(\mathbf{0}, I_n)$ and $\mathbf{Y}$ follows a multivariate $t$-distribution $\mathrm{t}(\nu, \mathbf{0}, I_n)$, where $\mathbf{0}$ is the mean vector and $I_n$ is the $n \times n$ identity matrix. Additionally, we  assume that $\mathbf Z\sim \mathrm{it}_n(\mu)$, representing  a  joint distribution with independent t-marginals $\mathrm t(\mu,0,1)$.   We set $n=10$ and the degrees of freedom for the $t$-distribution is $\nu = 3$. The corresponding curves of DQ  based on  expectiles are shown in Figure \ref{fig3}. 
We note that   $t$-model is always less diversified compared to the normal model due to the heavier tail and common shock of $t$-distribution.  Further, since $ \mathbf{Y} $ exhibits tail dependence (as noted in Example 7.39 of \cite{MFE15}) while $ \mathbf{Z} $ does not,   large losses across the components of $ \mathbf{Y} $ are more likely to occur simultaneously compared to $ \mathbf{Z} $. We can see that for $ \alpha < 0.2 $, $\DQex$ indicates that the portfolio $ \mathbf{Z} $ is less diversified than for $ \mathbf{X}$, highlighting a nice feature of  $\mathrm{DQ}^\mathrm{ex}_\alpha$  in capturing  heavy tails. In contrast, ${\rm DR}^{\ex_\alpha}(\mathbf{X}) ={\rm DR}^{\ex_\alpha}(\mathbf{Y})=1/ \sqrt{10}$ for any $\alpha\in(0,1)$, and we can compute  ${\rm DR}^{\ex}_{0.05}(\mathbf{Z})= 0.3244>1/ \sqrt{10}$, suggesting  DR  reports a smaller diversification in iid t-model than in the common shock t-model even when $\alpha$ is small. This observation is counter-intuitive.   

Moreover, as $\alpha$ approaches 0.5, $\mathrm{DQ}_\alpha^{\mathrm{ex}}$ converges to 1 due to the expectiles aligning with the expectations.


\begin{figure}[htb!]
\caption{$\mathrm{DQ}^{\ex}_\alpha (\mathbf{X})$ for $\alpha\in(0,0.5)$.}\label{fig3}
\centering
 \includegraphics[height=5cm]{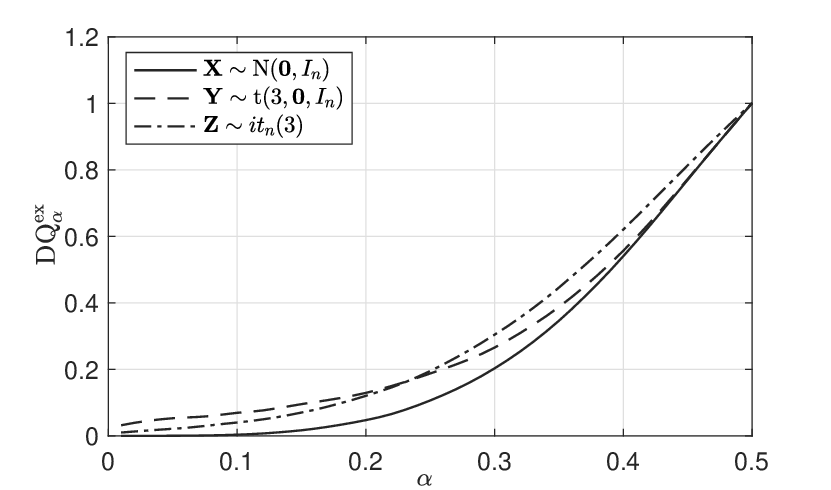}
 \captionsetup{font=small}
\end{figure}
\end{example}
In  the next example,  we show that the performance of DQs is consistent across different classes of risk measures, including VaR, ES, and expectiles.  	

\begin{example}
Let  $\mathbf{X}\sim \mathrm  t(\nu,\boldsymbol{\mu},\Sigma)$, where $
	\Sigma=(\sigma_{ij})_{4\times 4}$  with  $\sigma_{ii}=1$ and $\sigma_{ij}=0.3$ for $i\ne j$. For comparability, we use $\alpha=0.05$ for VaR, $\alpha=0.159$ for ES, and $\alpha=0.0311$ for expectile,\footnote{The parameters are set by $\ex_{0.031}(X)=\ES_ {0.159}(X)=\VaR_{0.05}(X)$ for  $X\sim \mathrm t(3)$.} and vary the value of $\nu$.
	  Note that the location $\boldsymbol{\mu}$ does not matter in computing DQ, and we can simply take $\boldsymbol \mu=\mathbf 0$. The curves for DQs based on VaR, ES, and expectiles exhibit similar behavior across the chosen probability levels, which suggests that the diversification level increases when the tail becomes thinner (larger value of $\nu$).
   \begin{figure}[htb!]
   \caption{DQs based on expectiles, ES and VaR for $\nu\in(1,10)$.}\label{fig4}
\centering
 \includegraphics[height=5cm]{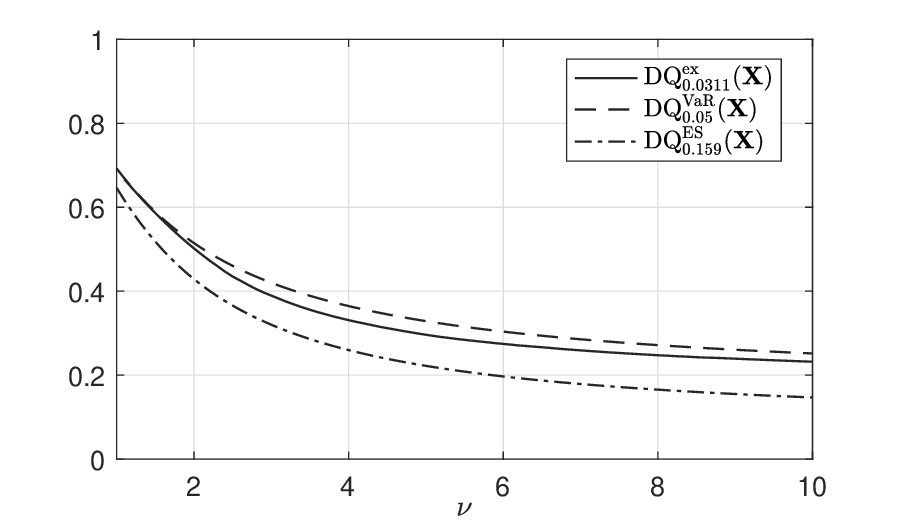}
 \captionsetup{font=small}
\end{figure}

\end{example}
In next proposition, we  show that  the portfolio optimization problem \eqref{eq:optimal_DQ} for  elliptical distributions is equal to solving the following problem 
$$\max_{\mathbf w\in \Delta_n}  k_{\mathbf w \Sigma \mathbf w^\top}:=\frac{\mathbf w^\top \boldsymbol{\sigma}}{ \sqrt{\mathbf w^\top\Sigma \mathbf w}}.$$

\begin{proposition}\label{prop:7} Suppose that $\mathbf X \sim  \mathrm{E}_{n}(\boldsymbol{\mu}, \Sigma, \tau)$.  We have, for $\alpha \in(0,1/2)$, 	\begin{align*}
\mathrm{DQ}^{\ex}_\alpha (\mathbf w \odot \mathbf X) =\frac{\mathbb{E}\left[\left(Y-k_{\mathbf w \Sigma \mathbf w^\top}\ex_\alpha(Y)\right)_+\right]}{\alpha \mathbb{E}\left[\left|Y-k_{\mathbf w \Sigma \mathbf w^\top}\ex_\alpha(Y)\right|\right]},\end{align*}	
 where   $ Y \sim \mathrm{E}_1(0,1,\tau)$; and   $$\min_{\mathbf w \in \Delta_n} {\rm DQ}^{\ex}_{\alpha}(\mathbf w \odot \mathbf X)={\rm DQ}^{\ex}_{\alpha}(\mathbf w^* \odot \mathbf X),$$  where 
 \begin{equation}\label{eq:opt-w}
		\mathbf w^*=\argmax_{\mathbf w\in \Delta_n}k_{\mathbf w \Sigma \mathbf w^\top} =\argmax_{\mathbf w\in \Delta_n} \frac{\mathbf w^\top \boldsymbol{\sigma}}{ \sqrt{\mathbf w^\top\Sigma \mathbf w}}.
	\end{equation}
\end{proposition}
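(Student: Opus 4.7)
The plan is to reduce Proposition \ref{prop:7} to Proposition \ref{prop:ellip} via the well-known closure of elliptical distributions under linear (in particular diagonal) transformations, and then exploit the monotonicity statement in Proposition \ref{prop:ellip} to identify the optimizer.

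First I would observe that if $\mathbf X \sim \mathrm{E}_n(\boldsymbol \mu, \Sigma, \tau)$ and we write $D_{\mathbf w}=\mathrm{diag}(w_1,\dots,w_n)$, then $\mathbf w \odot \mathbf X = D_{\mathbf w}\mathbf X$, and the characteristic function computation gives
\[
\mathbf w \odot \mathbf X \sim \mathrm{E}_n\bigl(D_{\mathbf w}\boldsymbol\mu,\, D_{\mathbf w}\Sigma D_{\mathbf w},\, \tau\bigr).
\]
Denote $\Sigma_{\mathbf w}=D_{\mathbf w}\Sigma D_{\mathbf w}$, whose diagonal is $(w_i^2\sigma_{ii})_{i\in[n]}$ and whose total sum of entries is $\mathbf w^\top \Sigma \mathbf w$. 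By the definition \eqref{eq:k}, since $w_i\ge 0$,
\[
k_{\Sigma_{\mathbf w}}=\frac{\sum_{i=1}^n\sqrt{w_i^2 \sigma_{ii}}}{\sqrt{\mathbf w^\top \Sigma \mathbf w}}=\frac{\sum_{i=1}^n w_i\sigma_i}{\sqrt{\mathbf w^\top \Sigma \mathbf w}}=\frac{\mathbf w^\top\boldsymbol\sigma}{\sqrt{\mathbf w^\top \Sigma \mathbf w}},
\]
which is exactly $k_{\mathbf w\Sigma \mathbf w^\top}$ in the notation of the statement.

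Next I would apply formula \eqref{DQ_elli} of Proposition \ref{prop:ellip} directly to the elliptical vector $\mathbf w\odot \mathbf X$, with the one-dimensional representative still $Y\sim \mathrm{E}_1(0,1,\tau)$ (the characteristic generator $\tau$ is unchanged under the linear map). This immediately yields the first displayed identity of the proposition:
\[
\mathrm{DQ}^{\ex}_\alpha(\mathbf w\odot \mathbf X)=\frac{\E[(Y-k_{\mathbf w\Sigma \mathbf w^\top}\ex_\alpha(Y))_+]}{\alpha\,\E[|Y-k_{\mathbf w\Sigma \mathbf w^\top}\ex_\alpha(Y)|]}.
\]

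Finally, for the optimization claim, I would invoke the monotonicity part of Proposition \ref{prop:ellip}: the right-hand side above is a decreasing function of $k_{\mathbf w\Sigma\mathbf w^\top}$. Hence minimizing $\mathrm{DQ}^{\ex}_\alpha(\mathbf w\odot \mathbf X)$ over $\mathbf w\in\Delta_n$ is equivalent to maximizing $k_{\mathbf w\Sigma\mathbf w^\top}=\mathbf w^\top\boldsymbol\sigma/\sqrt{\mathbf w^\top\Sigma\mathbf w}$ over $\Delta_n$, giving \eqref{eq:opt-w}. I do not anticipate a real obstacle here: the main content is already carried by Proposition \ref{prop:ellip}, and the only point that deserves mild care is the sign of $w_i$ in computing $\sqrt{w_i^2\sigma_{ii}}=w_i\sigma_i$, which is automatic since $\mathbf w\in\Delta_n\subset \R_+^n$.
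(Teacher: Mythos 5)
Your proposal is correct and rests on the same underlying idea as the paper's proof, namely the closure of elliptical distributions under linear maps together with the formula and monotonicity in $k_\Sigma$ from Proposition \ref{prop:ellip}; the only cosmetic difference is that you apply Proposition \ref{prop:ellip} as a black box to the elliptical vector $\mathbf w \odot \mathbf X \sim \mathrm{E}_n(D_{\mathbf w}\boldsymbol\mu, D_{\mathbf w}\Sigma D_{\mathbf w},\tau)$, whereas the paper re-runs the one-dimensional reduction directly for $\sum_i w_i X_i$ via the Cholesky factor of $\Sigma$. Both arguments are equivalent, and your observation that $k_{D_{\mathbf w}\Sigma D_{\mathbf w}}=\mathbf w^\top\boldsymbol\sigma/\sqrt{\mathbf w^\top\Sigma\mathbf w}$ (using $w_i\ge 0$) is exactly the computation the paper performs implicitly.
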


\subsection{MRV models}

 A random vector $\mathbf X\in\X^n$ has an  MRV model  with some $\gamma >0$ if there exists a Borel probability measure $\Psi$ on the unit sphere $\mathbb {S}^{n-1}:=\left\{\mathbf{s} \in \mathbb{R}^{n}:\|\mathbf{s}\|=1\right\}$
		such that for any $t>0$ and any Borel set $S \subseteq \mathbb{S}^{n}$ with $\Psi(\partial S)=0$,
		$$
		\lim _{x \rightarrow \infty} \frac{\p (\|\mathbf{X}\|>t x, ~\mathbf{X}/\|\mathbf{X}\| \in S)}{\p (\|\mathbf{X}\|>x)}=t^{-\gamma} \Psi(  S), $$	where $\|\cdot\|$  is the $L_1$-norm (one could use any other norm equivalent to the $L_1$-norm). We call  $\gamma$ the tail index of  $\mathbf X$ and $\Psi$  the spectral measure of $\mathbf X$.  This is written as $\mathbf X \in \mathrm{MRV}_{\gamma}(\Psi)$.

    The univariate regular variation   with   tail index $\gamma$ is defined as 
 	$$
 	\mbox{for all~} t>0, ~\lim _{x \rightarrow \infty} \frac{1-F_{X}(t x)}{1-F_{X}(x)}=t^{-\gamma},
 	$$
 where $F_X$ is the distribution function of $X$.
We write $X\in {\rm R  V}_{\gamma}$ for this property. 	

Since $\ex_\alpha(X)/\VaR_\alpha (X)\to (\gamma-1)^{-1/\gamma}$ as $\alpha \downarrow 0$ for $X\in \mathrm{RV}_\gamma$ with $\gamma>1$ (Proposition 2.3 of \cite{BD17}), the limiting behavior of DQ based on expectiles can be derived  by using the results of Theorem 3  in \cite{HLW23}.  For the sake of completeness, we provide a separate proof below. 
\begin{proposition}\label{prop:lim}
		Suppose that  $\mathbf{X} \in \mathrm{MRV}_{\gamma}(\Psi)$  with $\gamma>1$. Then we have
		$$
		\lim _{\alpha \downarrow 0} {\rm DQ}^{\ex}_{\alpha}(\mathbf X)=\eta_{\mathbf{1}}\left(\sum_{i=1}^{n}  \eta_{\mathbf{e}_{i}}^{1 / \gamma}\right)^{-\gamma}
 ,$$ where   $\eta_{\mathbf{x}}=\int_{\mathbb{S}^{n}}\left(\mathbf{x}^\top \mathbf{s}\right)_+^{\gamma} \Psi(\d \mathbf{s})$ for $\mathbf x\in \R^n$.   Moreover, if $X_1, \dots, X_n$ are iid  with $X_1\in\mathrm {RV}_\gamma$, then $\mathrm{DQ}_\alpha^{\ex}(\mathbf X) \to n^{1-\gamma}$ as $\alpha \downarrow 0$.  	\end{proposition}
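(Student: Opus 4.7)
The plan is to reduce the expectile-based limit to the corresponding VaR-based limit from \cite{HLW23}, exploiting the asymptotic proportionality $\ex_\alpha(X)/\VaR_\alpha(X)\to(\gamma-1)^{-1/\gamma}$ as $\alpha\downarrow 0$ for $X\in \mathrm{RV}_\gamma$ (\cite{BD17}, Proposition 2.3) as the bridge. Write $S=\sum_{i=1}^n X_i$, and let $\alpha^{*}_{\ex}$ and $\alpha^{*}_{\VaR}$ denote the infima defining $\mathrm{DQ}^{\ex}_{\alpha}(\mathbf X)$ and $\mathrm{DQ}^{\VaR}_{\alpha}(\mathbf X)$ respectively. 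Since $\mathbf X\in\mathrm{MRV}_\gamma(\Psi)$ with $\gamma>1$, every linear combination $\mathbf x^\top \mathbf X$ with $\eta_{\mathbf x}>0$ lies in $\mathrm{RV}_\gamma$; in particular each $X_i$ and $S$ are regularly varying with index $\gamma$, and by Proposition \ref{prop:2} both $\alpha^{*}_{\ex}$ and $\alpha^{*}_{\VaR}$ tend to $0$ as $\alpha\downarrow 0$ (since the limit $\eta_{\mathbf 1}(\sum_i\eta_{\mathbf{e}_i}^{1/\gamma})^{-\gamma}$ is finite by the triangle/Minkowski-type inequality for $\eta$).

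The core step is to show $\alpha^{*}_{\ex}/\alpha^{*}_{\VaR}\to 1$. Using the BD17 asymptotic, I have $\sum_{i=1}^n \ex_\alpha(X_i)\sim(\gamma-1)^{-1/\gamma}\sum_{i=1}^n \VaR_\alpha(X_i)=(\gamma-1)^{-1/\gamma}\VaR_{\alpha^{*}_{\VaR}}(S)$, and similarly $\ex_{\alpha^{*}_{\ex}}(S)\sim(\gamma-1)^{-1/\gamma}\VaR_{\alpha^{*}_{\ex}}(S)$. Since the defining equation (Proposition \ref{prop:2}(i)) forces $\ex_{\alpha^{*}_{\ex}}(S)=\sum_{i=1}^n\ex_\alpha(X_i)$, the constant $(\gamma-1)^{-1/\gamma}$ cancels and I obtain $\VaR_{\alpha^{*}_{\ex}}(S)/\VaR_{\alpha^{*}_{\VaR}}(S)\to 1$. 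Regular variation of $\VaR_\cdot(S)$ as a function of the level (which is an inverse of a regularly varying tail) then yields $\alpha^{*}_{\ex}/\alpha^{*}_{\VaR}\to 1$. Combined with Theorem 3 of \cite{HLW23}, which gives $\alpha^{*}_{\VaR}/\alpha\to \eta_{\mathbf 1}(\sum_{i=1}^n\eta_{\mathbf{e}_i}^{1/\gamma})^{-\gamma}$, this delivers the first claim.

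For the iid case, the MRV structure has spectral measure concentrated on $\{\pm\mathbf e_i\}_{i\in[n]}$ with equal mass on the positive-tail atoms, giving $\eta_{\mathbf e_i}\propto 1/n$ and $\eta_{\mathbf 1}\propto 1$, so direct substitution yields $\eta_{\mathbf 1}(\sum_i\eta_{\mathbf{e}_i}^{1/\gamma})^{-\gamma}=n^{1-\gamma}$. Alternatively, one can short-circuit by the convolution-closure identity $\p(S>x)\sim n\,\p(X_1>x)$, which gives $\VaR_\alpha(S)\sim n^{1/\gamma}\VaR_\alpha(X_1)$; combining with $\sum_i\ex_\alpha(X_i)=n\ex_\alpha(X_1)\sim n(\gamma-1)^{-1/\gamma}\VaR_\alpha(X_1)$ and $\ex_{\alpha^*}(S)\sim(\gamma-1)^{-1/\gamma}n^{1/\gamma}\VaR_{\alpha^*}(X_1)$, one reads off $\VaR_{\alpha^*}(X_1)/\VaR_\alpha(X_1)\to n^{1-1/\gamma}$, hence $\alpha^*/\alpha\to n^{1-\gamma}$.

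I expect the main obstacle to be the clean justification that $\VaR_{\alpha^{*}_{\ex}}(S)\sim \VaR_{\alpha^{*}_{\VaR}}(S)$ implies $\alpha^{*}_{\ex}/\alpha^{*}_{\VaR}\to 1$ rather than merely boundedness of the ratio: this uses that the generalized inverse of a regularly varying tail is itself regularly varying (index $1/\gamma$), together with the fact that a strict inequality of ratios in the level would force a nontrivial ratio in the $\VaR$ values. One also needs to verify that $\alpha^{*}_{\ex}>0$ (so that Proposition \ref{prop:2}(i) applies) for all sufficiently small $\alpha$, which follows because $\eta_{\mathbf 1}>0$ under the MRV assumption (otherwise the limit is trivially zero and both $\DQex$ and $\DQVaR$ vanish in the limit, and the identity still holds).
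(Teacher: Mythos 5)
Your proof is correct in substance, and it takes the route that the paper explicitly mentions as possible but deliberately does not follow: you reduce the expectile limit to the known VaR limit of Theorem 3 in \cite{HLW23} by showing $\alpha^*_{\ex}/\alpha^*_{\VaR}\to 1$, whereas the paper gives a self-contained argument that starts from Lemma 2.2 of \cite{ME13} (the asymptotics of $\VaR_\alpha(\sum_i X_i)$ and $\sum_i\VaR_\alpha(X_i)$ relative to $\VaR_\alpha(\|\mathbf X\|)$), converts these into expectile asymptotics via $\ex_\alpha(Y)/\VaR_\alpha(Y)\to(\gamma-1)^{-1/\gamma}$, and then reads off $c=\alpha^*/\alpha$ from $\ex_\alpha(S)/\ex_{c\alpha}(S)\simeq c^{1/\gamma}$. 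The two arguments use exactly the same two ingredients --- the Bellini--Di Bernardino proportionality constant (which cancels in every ratio) and the regular variation of the quantile function of $S$ --- so your version buys brevity by outsourcing the spectral-measure computation to \cite{HLW23}, at the cost of an extra intermediate quantity $\alpha^*_{\VaR}$ and the attendant need to justify $\VaR_{\alpha^*_{\VaR}}(S)=\sum_i\VaR_\alpha(X_i)$ (exact equality requires continuity of $F_S$ near the tail; under MRV one should phrase this as an asymptotic equivalence). You correctly identify the one genuinely delicate step, namely that $\VaR_{\alpha^*_{\ex}}(S)\sim\VaR_{\alpha^*_{\VaR}}(S)$ forces $\alpha^*_{\ex}/\alpha^*_{\VaR}\to1$; this follows from Potter's bounds (a negative-index regularly varying function cannot have value ratio tending to $1$ along a sequence whose argument ratio stays away from $1$), and the paper's own proof relies on the same locally uniform convergence when it treats $c=\alpha^*/\alpha$ as a constant inside $\VaR_{c\alpha}$. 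Your two derivations of the iid case ($n^{1-\gamma}$), via the axis-concentrated spectral measure or via $\p(S>x)\sim n\,\p(X_1>x)$, are both sound; the paper instead invokes Example 3.1 of \cite{ELW09} to get $\eta_{\mathbf 1}/\eta_{\mathbf e_i}=n$, which is equivalent. One small inaccuracy: $\alpha^*_{\ex}\to0$ and $\alpha^*_{\VaR}\to0$ follow simply from $\alpha^*\le\alpha$ (since $\mathrm{DQ}\le1$), not from finiteness of the limiting constant as your parenthetical suggests.
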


 It is well known that (see, e.g., \cite{MY15})$$
\lim _{\alpha \downarrow 0} \mathrm{DR}^{\ex_\alpha}(\mathbf X)=\lim _{\alpha \downarrow 0} \mathrm{DR}^{\VaR_\alpha}(\mathbf X)=\lim _{\alpha \downarrow 0} \mathrm{DR}^{\ES_\alpha}(\mathbf X)=n^{1 / \gamma-1}, $$
provided that $X_1, \ldots, X_n$ are iid random variables and have regularly varying tails with regularity index $\gamma>0$.  Under the same conditions, together with Theorem 3 of \cite{HLW23},  we have shown that 
$$
\lim _{\alpha \downarrow 0} \mathrm{DQ}^{\ex}_\alpha(\mathbf X)=\lim _{\alpha \downarrow 0} \mathrm{DQ}^{\VaR}_\alpha(\mathbf X)=\lim _{\alpha \downarrow 0} \mathrm{DQ}^{\ES}_\alpha(\mathbf X)=n^{1-\gamma}. $$

\section{Numerical illustrations}\label{sec:6} 	

 In this section, we apply real data to illustrate the performance of different diversification indices. In Section \ref{sec:61}, we compare the performance of DQs based on VaR, ES, expectiles, and loss-gain ratio constructed from Omega ratio for the portfolio with 5 stocks from 5 different sectors in S\&P 500. In Section \ref{sec:62}, we discuss the portfolio selection problem with 40 stocks from  10 different sectors in S\&P 500. We compare the performance of  optimally diversified portfolios obtained by minimizing DQs based on expectiles, VaR and ES, as well as maximizing Omega ratio.

 \subsection{Comparing diversification indices }\label{sec:61}
  We first identify the largest stock in each of the 10 sectors in  S\&P 500  ranked by market cap in 2014. Among these stocks, we select the 5 largest stocks  to build our portfolio\footnote{XOM from ENR, AAPL from IT, BRK/B from FINL,  WMT from  CONS, and GE from INDU.}. We compute $\mathrm{DQ}_\alpha^{\mathrm{VaR}}$, $\mathrm{DQ}_\alpha^{\mathrm{ES}}$,  and $\mathrm{DQ}_\alpha^{\mathrm{ex}}$  on each day using the empirical distribution in a rolling window of 500 days with $\alpha=0.01$ or $0.05$. The values of $\alpha$ are chosen consistent with common levels used for VaR and ES, although for expectiles these levels are not special.
  
  The historical   asset prices are from Yahoo Finance and we  use the period from January 2, 2014,  to December 29, 2023, with a total of 2516 observations of daily losses  and   500 trading days for the initial training.  
  As we discuss in Remark \ref{rem:omega}, $\DQex$ is uniquely determined by the Omega ratio. Hence, we also present the gain-loss ratio for the portfolio; that is, $\Omega=\E[S_+]/\E[S_-]$ with $S$ as the pooling loss of the portfolio. Since we use the convention of small value to represent a better diversification level, we plot the loss-gain ratio $1/\Omega$ here for comparability.

   \begin{figure}[htb!]
   \caption{DQs  based on expectiles, ES and VaR  with $\alpha = 0.01$ (left panel) and $\alpha=0.05$ (right panel), and the  loss-gain ratio.}\label{fig5}
\centering
 \includegraphics[height=5cm]{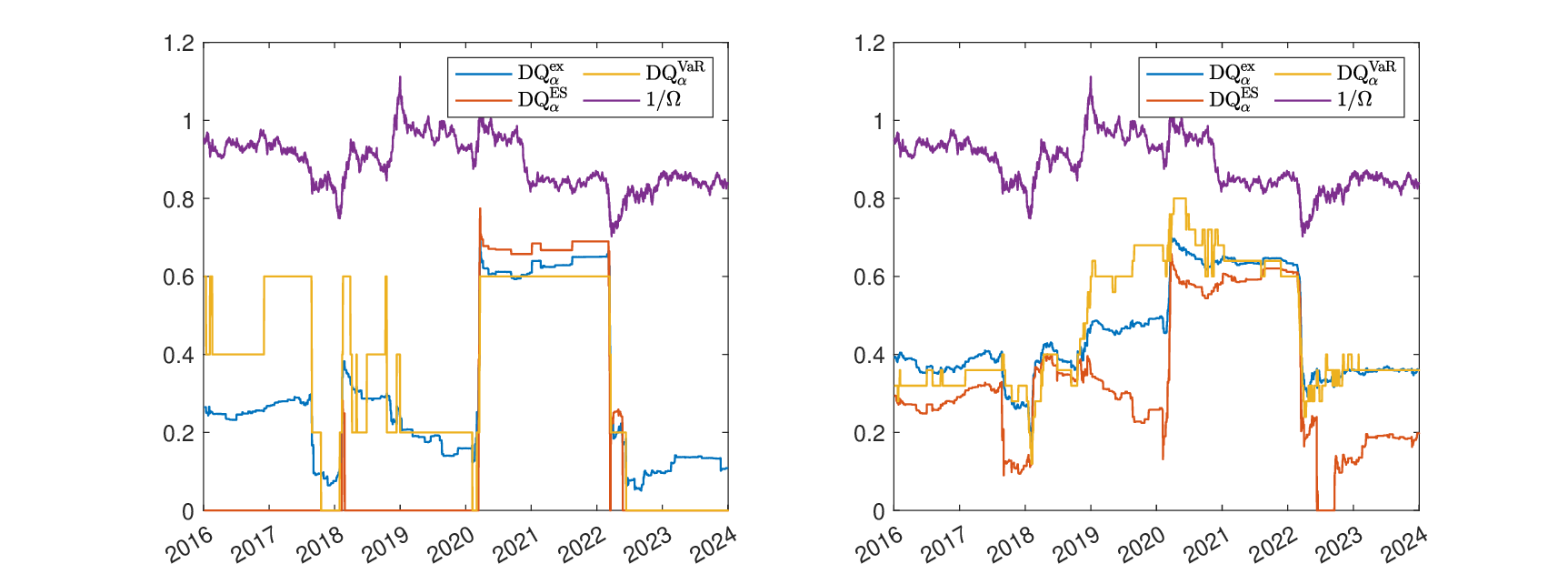}
 \captionsetup{font=small}
\end{figure}

We examine two scenarios with $\alpha = 0.01$ and $\alpha = 0.05$. As shown in Figure \ref{fig5}, the values of the DQ range from 0 to 1, with a significant increase corresponding to the onset of COVID-19. For $\alpha = 0.01$, DQs based on VaR and ES exhibit instability due to the data scarcity issue, resulting in sharp fluctuations at some time points. In contrast, DQ based on expectiles appears to be smoother, indicating a more robust performance in this context. When $\alpha = 0.05$, DQ maintains a relatively consistent temporal pattern across the three risk measures since we have a relatively larger size of tail data. Although the loss-gain ratio $1/\Omega$ does not suffer from the the data scarcity issue, it does not reflect diversification as the other three indices. For instance, in early 2020, due to COVID outbreak, the empirical correlation between assets increased substantially, indicating a weaker diversification. This is captured by all three DQs, but not the Omega ratio, which by definition only evaluates the total risk.

\subsection{Optimal diversified portfolios}\label{sec:62}
We fix $\alpha = 0.05$ and select the best diversified portfolios by minimizing $\mathrm{DQ}_\alpha^{\mathrm{ex}}$, $\mathrm{DQ}_\alpha^{\mathrm{ES}}$, and $\mathrm{DQ}_\alpha^{\mathrm{VaR}}$, as well as by maximizing the Omega ratio $\Omega_R(t_0)$  defined in \eqref{eq:Om1} with the random return $R=- \mathbf{w}^\top \mathbf{X}$ and $t_0=-\E[\sum_{i=1}^nX_i/n]$.\footnote{The threshold $t_0$ is the expected return rate of the equal weight portfolio with $\mathbf w=(1/n, \dots, 1/n)$.}
We apply the algorithm in \eqref{LP} to optimize $\mathrm{DQ}_\alpha^{\mathrm{ex}}$,  the algorithms from Proposition 6 in \cite{HLW24} to optimize $\mathrm{DQ}_\alpha^{\mathrm{VaR}}$ and $\mathrm{DQ}_\alpha^{\mathrm{ES}}$, and the algorithm from \cite{KZCR14a} to optimize the Omega ratio. These algorithms are either linear or convex optimization problems, ensuring computational efficiency.

 We  choose the 4 largest stocks from each of the 10 different sectors of S\&P 500   ranked by market cap in 2014 as the portfolio compositions (40 stocks in total). We use data from January 2, 2014,  to December 28, 2023 to build up the portfolio,    with the risk-free rate $r = 2.13\%$, corresponding to the 10-year yield of the U.S. Treasury Bill in January 2016. The initial wealth is set to 1. Our portfolio strategies are rebalanced at the start of each month, assuming no transaction costs. The optimal portfolio weights are calculated with a rolling window of 500 days at each period.
 
The portfolio performance is reported in Figure \ref{sec_40}.  Summary statistics,  including the annualized return (AR), the annualized volatility (AV),  the Sharpe ratio (SR) are reported in Table \ref{tab_40}. 
\begin{figure}[!htbp]
\caption{Wealth processes for  portfolios, 40 stocks, Jan 2016 - Dec 2023 with $\alpha=0.05$.}\label{sec_40} 
\centering
           \includegraphics[height=5cm]{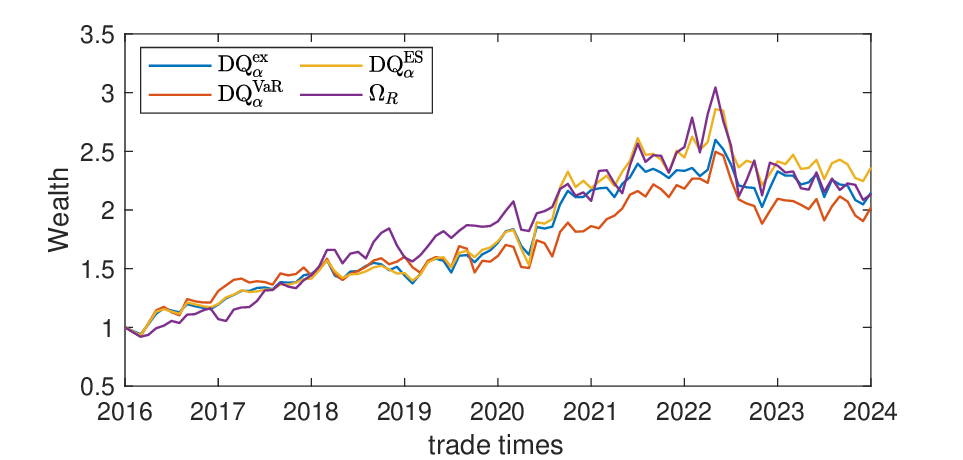}
\end{figure} 
  \begin{table}[!htbp]  
\def\arraystretch{1}
  \begin{center} 
    \caption{Summary statistics  for   different portfolio strategies from Jan 2016 to Dec 2023 with $\alpha=0.05$.} \label{tab_40} 
  \begin{tabular}{c|cccc} 
    $\%$ &${\rm DQ}^{\ex}_\alpha$ &${\rm DQ}^{\VaR}_\alpha$ &${\rm DQ}^{\ES}_\alpha$ & $\Omega_R(t_0)$  \\    \hline
AR & 10.19   &   9.35  &\textbf{11.24} & 10.07   \\
AV & \textbf{15.64}  & 18.33 & 17.88 & 20.57 \\
SR  & \bf{51.51}     & {39.37} & {46.97}&  {38.61} 
\\
 \hline \hline 
    \end{tabular}
    \end{center}
    \end{table}

We can see from Table \ref{tab_40} that the strategy based on minimizing $\DQex$, with the highest Sharp ratio, is quite competitive compared with other strategies.
In particular, the annualized volatility for minimizing $\DQex$ is relatively lower than that of maximizing the Omega ratio. As mentioned in \cite{SM17}, the Omega ratio is highly sensitive to changes in $t$, and there is no formal guideline for selecting $t$. Moreover, for large values of $t$, maximizing the Omega ratio of a return typically results in a portfolio concentrated in a single asset with the highest mean return over the investment horizon. Consequently, this optimization approach often fails to achieve portfolio diversification when $t$ is large. Therefore, our result suggests that expectile is an ideal choice for the threshold $t$ in our context.
In contrast, the strategies of minimizing ${\rm DQ}^{\rm VaR}_\alpha$ is less favorable with a lower return and higher volatility, which results in lower Sharp ratio. 



Next, we compare the results for $\alpha = 0.01$ and $\alpha = 0.1$, with results presented in Figure \ref{sec_40_2} and Table \ref{tab_40_2}. Generally, increasing values of $\alpha$ improves the portfolio performance. This may benefit from more information on larger data sets due to the increment of $\alpha$, resulting in more robust strategies.
When $\alpha =0.01$, due to the scarcity of tail data, the value of ${\rm DQ}^\VaR_\alpha$ can be minimized to 0 for every period.
 

\begin{figure}[!htbp]
\caption{Wealth processes for  portfolios, 40 stocks, Jan 2016 - Dec 2023.}\label{sec_40_2} 
\centering
           \includegraphics[height=5cm]{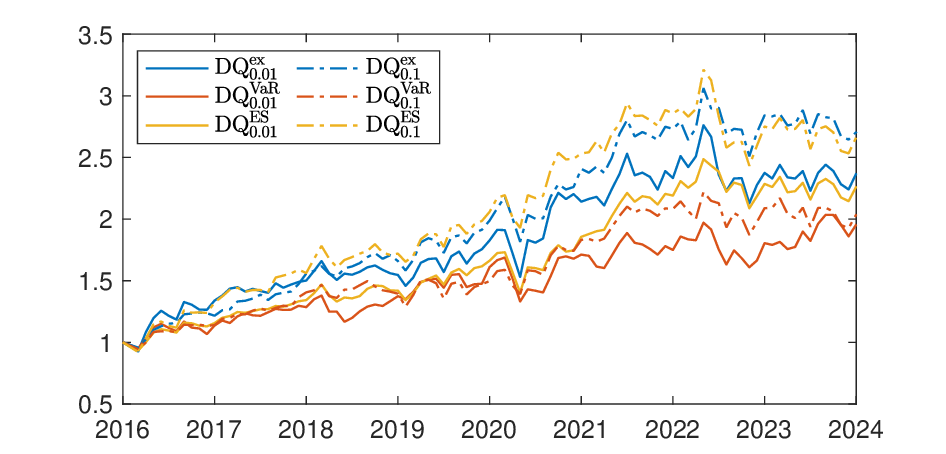}
\end{figure}
  \begin{table}[!htbp]  
\def\arraystretch{1}
  \begin{center} 
    \caption{Summary statistics  for   different portfolio strategies from Jan 2016 to Dec 2023.} \label{tab_40_2} 
  \begin{tabular}{c|ccc|ccc} 
    $\%$ &${\rm DQ}^{\ex}_{0.01}$ &${\rm DQ}^{\VaR}_{0.01}$ &${\rm DQ}^{\ES}_{0.01}$ & ${\rm DQ}^{\ex}_{0.1}$ &${\rm DQ}^{\VaR}_{0.1}$ &${\rm DQ}^{\ES}_{0.1}$  \\    \hline
AR & 11.32    &   8.65  &10.72 & \textbf{13.10} &9.50 &12.84  \\
AV & 19.13   & 16.36 & \textbf{14.69}& 15.12  &15.62 &16.92\\
SR  & {48.06}    & {39.88} &{53.69}&  \bf{72.54}&{47.21} &{59.08}
\\
 \hline \hline 
    \end{tabular}
    \end{center}
    \end{table}

Finally, we conduct the portfolio selection problem on the dataset with the same portfolio in the period 2004-2013 with 2008 financial crisis included.  We use the risk-free rate $r = 4.38\%$ to calculate portfolio statistics.    The results are reported in Figures \ref{sec_40_3}  and Table \ref{tab_40_3} with similar patterns shown in Figure \ref{sec_40} and Table  \ref{tab_40}. We can see that minimizing $\DQex$ is still quite competitive with other strategies.

    \begin{figure}[!htbp]
\caption{Wealth processes for  portfolios, 40 stocks, Jan 2006 - Dec 2013 with $\alpha=0.05$.}\label{sec_40_3} 
\centering
           \includegraphics[height=5cm]{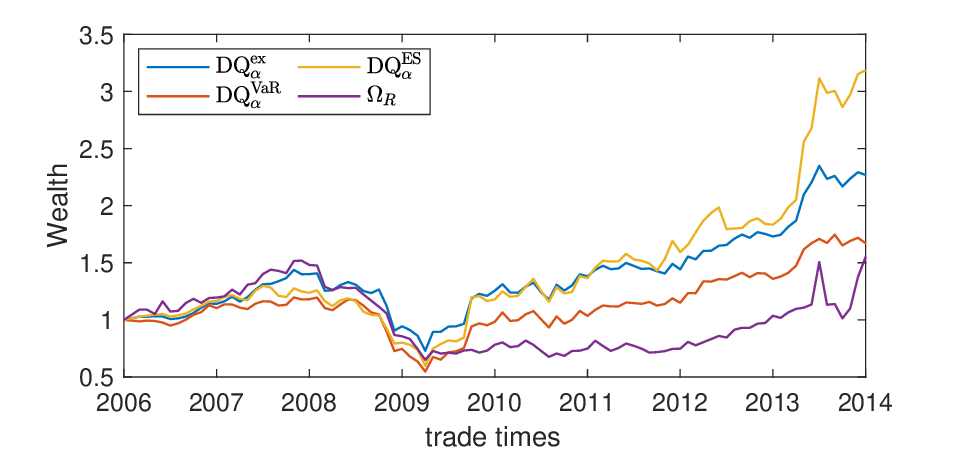}
\end{figure} 
  \begin{table}[!htbp]  
\def\arraystretch{1}
  \begin{center} 
    \caption{Summary statistics  for   different portfolio strategies from Jan 2006 to Dec 2013 with $\alpha=0.05$.} \label{tab_40_3} 
  \begin{tabular}{c|cccc} 
    $\%$ &${\rm DQ}^{\ex}_\alpha$ &${\rm DQ}^{\VaR}_\alpha$ &${\rm DQ}^{\ES}_\alpha$ & $\Omega_R(t_0)$  \\    \hline
AR & 10.46   &   6.47 &\textbf{14.73} & 6.15   \\
AV & \textbf{19.51}  & 20.46 & 24.83 & 24.93 \\
SR  & {31.16}     & {10.19} & \bf{41.74}&  {7.10} \\
 \hline \hline 
    \end{tabular}
    \end{center}
    \end{table}


 \section{Conclusions} \label{sec:7}

In this study, we examined the properties of $\mathrm{DQ}_\alpha^{\ex}$ as an  alternative to $\mathrm{DQ}_\alpha^{\VaR}$ and $\mathrm{DQ}_\alpha^{\ES}$. 
The main message is that $\mathrm{DQ}_\alpha^{\ex}$ enjoys most of the desirable properties of $\mathrm{DQ}_\alpha^{\ES}$ and $\mathrm{DQ}_\alpha^{\VaR}$, and has its own advantages, such as overcoming the small-sample problem and satisfying pseudo-convexity. 
Our findings also showed a strong connection between $\mathrm{DQ}_\alpha^{\ex}$ and the Omega ratio. Moreover,    the portfolio problem of DQ based on expectiles  can be effectively solved using linear programming techniques.

Our general commendation is that, given its various advantages, DQ based on expectiles can be useful in many applications.
It has superior features in terms of balancing gain-loss ratio and small-sample applicability over DQs based on VaR and ES.  
However, we do not intend to suggest that DQ based on expectiles is the only suitable choice, or the best choice, of DQ. In particular,
DQ based on ES shares many properties with DQ based on expectiles, and moreover,
 ES has a special role in some financial applications because ES is the standard risk measure in the current Basel Accords, making DQ based on ES a natural choice of measurement for diversification. 

Future research could explore robust DQ optimization under uncertainty, akin to \cite{PT05}  for the Sharpe ratio and \cite{KZCR14b} for  Omega ratio. This could involve developing techniques that account for potential market fluctuations and model imperfections, ensuring the DQ remains effective in diverse conditions. Additionally, investigating the non-parametric estimation and asymptotic properties of DQ estimators would provide valuable insights into their statistical behavior and reliability. Such studies could enhance our understanding of the DQ's performance across various contexts, leading to more robust applications in portfolio management and risk assessment.

\appendix

\section{Proofs of all results}
\label{app:proof}

\begin{proof}[Proof of Theorem \ref{th:var}]
Since $\ex_\beta$ is coherent for all $\beta \in (0,1)$, we have
\begin{align*}
\mathrm{DQ}^{\ex}_\alpha (\mathbf{X})& = \frac{1}{\alpha} \inf\left\{\beta \in (0,1) :  \ex_{\beta} \left(\sum_{i=1}^n X_i - \sum_{i=1}^n \ex_{\alpha}(X_i)\right) \le0 \right\}\\&= \frac{1}{\alpha} \inf\left\{\beta \in (0,1) :  \ex_{\beta} \left(S-t\right) \le0 \right\}\\
&=\frac{1}{\alpha}\inf \left\{\beta\in (0,1): S-t \in \mathcal{A}_{\ex_{\beta}}\right\}=\frac{1}{\alpha} \times \frac{\mathbb{E}\left[ (S-t) _+\right]}{\mathbb{E}\left[\left|S-t\right|\right]}.
\end{align*} 
The last equality follows from the acceptance set of $\ex_\beta$ in \eqref{eq:acceptance}.
Furthermore, for any $X\in L^1$, we have
$$
\E[(X-y)_{-}]=\int_{-\infty}^y(y-x) \mathrm{d} F_X(x)=y F_X(y)-\int_{-\infty}^y x \mathrm{d} F_X(x),
$$ 
and
$$
\begin{aligned}
\mathbb E[|X-y|] & =\int_{-\infty}^y(y-x) \mathrm{d} F_X(x)+\int_y^{\infty}(x-y) \mathrm{d} F_X(x) \\
& =2\left(y F_X(y)-\int_{-\infty}^y x \mathrm{~d} F_X(x)\right)+\E[X]-y.
\end{aligned}
$$
Therefore, 
\begin{align*}
\mathrm{DQ}^{\ex}_\alpha (\mathbf{X})=\frac{1}{\alpha}
\left (1- \frac{\mathbb{E}\left[(S-t)_-\right]}{\mathbb{E}\left[\left|S-t\right|\right]}\right)
=\frac 1 \alpha \left(1-\widetilde{F}_{S}\left(t\right )\right),
\end{align*} 
where $\widetilde F_{S}$ is defined in \eqref{t_F}. 
\end{proof}
\begin{proof}[Proof of Proposition \ref{th:ex-01}]
   (i) 
   Since $\ex_\alpha(\sum_{i=1}^n X_i) \le \sum_{i=1}^n \ex_\alpha(X_i)$, it is clear that $\DQex(\mathbf X)$ takes value in $[0,1]$ for any $\mathbf X \in \X$. Next, we show that every point in $[0,1]$ is attainable by $\DQex$.
   
   Let $X_1$ and $X_2$ follow a uniform distribution on $[-1,1]$. For 
 any $t\in [0,2]$, Theorem 3.1 of \cite{WW16} shows that we can find $(X_1, X_2)$ such that $X_1+X_2 \sim \mathrm{U}[-t,t]$. Let $\mathbf X=(X_1, X_2,0, \dots, 0)$. We can compute
$$
y_1=\ex_\alpha(X_1)=\ex_\alpha(X_2)=\frac{1-2\sqrt{\alpha(1-\alpha)}}{1-2\alpha}\in(0,1]$$  for $ \alpha\in[0,1/2)$.  
 Meanwhile, we have $$\widetilde{F}_{S}(y)=
\widetilde{F}_{X_1+X_2}(y)=\frac{(y+t)^2}{2(y^2+t^2)}.
$$
Thus,  by \eqref{eq:DQ_ex},  we have 
$$\begin{aligned}\mathrm{DQ}^{\ex}_\alpha (\mathbf{X})&=\frac{1-\widetilde{F}_{S}( 2y_1)}{\alpha}=\frac{(2y_1-t)^2}{2\alpha(4y_1^2+t^2)}.\end{aligned}$$
It is easy to check that if $t=2y_1$, then   $\mathrm{DQ}^\ex_\alpha(\mathbf X)=0$; whereas if  $t=2y^2_1$,  then $\mathrm{DQ}^\ex_\alpha(\mathbf X)=1$. Since $\mathrm{DQ}^{\ex}_\alpha (\mathbf{X})$ is continuous in $t\in[0,2]$, it follows  that every point in $[0,1]$ is attained by $\mathrm{DQ}^\ex_\alpha$.

(ii)  The first part is straightforward from the alternative formula  \eqref{eq:var-alter}. In particular, if $\sum_{i=1}^n X_i$ is a constant, we have $\ex_{0}\left(\sum_{i=1}^n X_i\right)=\ex_{\alpha}\left(\sum_{i=1}^n X_i\right)\leq\sum_{i=1}^n\ex_{\alpha}( X_i)$, which gives $\mathrm{DQ}_\alpha^\ex(\mathbf{X})=0$.  

(iii) Note that $\sum_{i=1}^n \lambda_i>0$ and $\E\left[\left(X-\ex_\alpha(X)\right)_+\right]>0$. Therefore,
\begin{align*}
\mathrm{DQ}_\alpha^\ex(\lambda_1X, \dots, \lambda_nX)&=\frac{1}{\alpha}
\left\{\mathbb{E}\left[\left(\sum_{i=1}^n \lambda_i \left(X-\ex_\alpha(X)\right)\right)_+\right]\Big/\mathbb{E}\left[\left|\sum_{i=1}^n \lambda_i \left(X-\ex_\alpha(X)\right)\right|\right]\right\}\\
&=\frac{1}{\alpha}
\left\{\E\left[\left(X-\ex_\alpha(X)\right)_+\right]\Big/\mathbb{E}\left[\left| \left(X-\ex_\alpha(X)\right)\right|\right]\right\}\\
&=\frac{1}{\alpha}\frac{1}{1+\frac{1-\alpha}{\alpha}}=1. 
\end{align*}
where the last equality follows from  \eqref{eq:ex}.
\end{proof}

 \begin{proof}[Proof of Proposition \ref{prop:2}]
(i) Since $\DQex(\mathbf X)>0$,  by (ii) in Proposition \ref{th:ex-01}, we have
$\p(\sum_{i=1}^n X_i\le \sum_{i=1}^n \ex_\alpha(X_i))<1$. Hence, we have 
 $$\ex_0\left(\sum_{i=1}^n X_i\right)=\esssup\left(\sum_{i=1}^n X_i\right)> \sum_{i=1}^n \ex_\alpha(X_i).$$ 
Moreover, $\sum_{i=1}^n X_i$ is not a constant. Hence,
$\beta \mapsto \ex_\beta(\sum_{i=1}^n X_i)$ is continuous and strictly decreasing (Theorem 1 of \cite{NP87}). As $\alpha \in (0,1/2)$, we have $ \ex_{\alpha} \left(\sum_{i=1}^n X_i\right) \leq  \sum_{i=1}^n \ex_\alpha(X_i) $. Hence,  there exists unique $c \in (0, 1]$ such that 
$\ex_{c\alpha} \left(\sum_{i=1}^n X_i\right) = \sum_{i=1}^n \ex_\alpha(X_i)$
holds. By Definition \ref{de:DQex}, we have $\DQex(\mathbf X)=\alpha^*/\alpha=c$.

(ii) First, we assume $\DQex(\mathbf X)>0$. Since   $\ex_{1-\alpha}(-X)=-\ex_{\alpha}(X)$  for any  $X\in L^1$,   we have  \begin{align*}
(1-\alpha)\mathrm{DQ}^{\ex}_{1-\alpha} (-\mathbf{X})
& =  \inf\left\{\beta \in (0,1) : \ex_{\beta} \left(-\sum_{i=1}^n X_i\right) \le  \sum_{i=1}^n \ex_{1-\alpha}(-X_i) \right\}\\
&=  \inf\left\{\beta \in (0,1) :  -\ex_{1-\beta} \left(\sum_{i=1}^n X_i\right) \le  -\sum_{i=1}^n {\ex}_\alpha(X_i) \right\}\\&= 
 1-\sup\left\{\gamma \in (0,1) :  \ex_{\gamma} \left(\sum_{i=1}^n X_i\right) \ge  \sum_{i=1}^n {\ex}_\alpha(X_i)\right\}.
\end{align*}   
By (i), we have that $\DQex(\mathbf X)$ 
is the unique solution of $c$ that satisfies $\ex_{c\alpha}(\sum_{i=1}^n X_i)=\sum_{i=1}^n \ex_\alpha(X_i)$. Hence, 
$$\sup\left\{\gamma \in (0,1) :  \ex_{\gamma} \left(\sum_{i=1}^n X_i\right) \ge  \sum_{i=1}^n \ex_\alpha(X_i) \right\}=\alpha\DQex(\mathbf X).$$
Therefore,
$(1-\alpha)\mathrm{DQ}^{\ex}_{1-\alpha} (-\mathbf{X})=1-\alpha\DQex(\mathbf X)$,
which gives \eqref{eq:sym}.	

If $\DQex(\mathbf X)=0$, then we have $\sum_{i=1}^n X_i \le \sum_{i=1}^n \ex_\alpha(X_i)$ a.s.~from Proposition \ref{th:ex-01} (ii). Therefore,
$\sum_{i=1}^n -X_i \ge \sum_{i=1}^n -\ex_\alpha(X_i)=\sum_{i=1}^n \ex_{1-\alpha}(X_i)$ a.s.
Since $\mathbf X$ is non-degenerate, $\ex_\beta(\sum_{i=1}^n X_i)$ is strictly decreasing with $\beta$. Therefore, $$\ex_\beta\left(-\sum_{i=1}^n X_i\right)>\ex_{1}\left(-\sum_{i=1}^n X_i\right)\ge \sum_{i=1}^n \ex_{1-\alpha}(X_i) ~\mbox{for all}~ \beta\in (0,1),$$ which implies $\DQex(-\mathbf X)=1/(1-\alpha)$ from Definition \ref{de:DQex}.  It is easy to check  $\alpha \DQex(\mathbf X)+(1-\alpha)\DQES(-\mathbf X)=1$.
\end{proof}

\begin{proof}[Proof of Proposition \ref{uncorrelated}]
Since expectile is a coherent risk measure, the proof $\lim_{n \to \infty} \DQex(\mathbf X)=0$ follows the same structure as the proof for  $\lim_{n \to \infty} \DQES(\mathbf X)=0$ in \cite[ Theorem 2]{HLW23}. For clarity, we outline our proof below.

Since $\mathrm{DQ}^\ex_\alpha$ is location invariant, we  can assume that $\E[X_i]=0$ for $i=1, 2, \dots$. 
Let $\mathbf X_n=(X_1, \dots, X_n)$ and $S_n=\sum_{i=1}^n X_i$. By the $L^2$-Law of Large Numbers in \citet[Theorem 2.2.3]{D19}, we have $S_n/n \buildrel L^2 \over \rightarrow 0$.
By \citet[Corollary 7.10]{R13}),   expectile is $L^1$-continuous.
Hence, $\ex_\beta(S_n/n) \to 0$  as $n\to\infty$ for all $\beta \in (0,1)$.
Let $\epsilon=\inf_{i\in \mathbb{N}} \ex_\alpha(X_i)$.
As a result, for every $\beta \in (0,1)$, there exists $N_\beta$ such that $\ex_\beta(S_n/n)<\epsilon$ for all $n>N_\beta$. Therefore, we have  
$$\alpha^*=\inf\left\{\beta \in (0,1):\ex_\beta(S_n)\le \sum_{i=1}^n \ex_\alpha(X_i)\right\}\le \inf\left\{\beta \in (0,1):\ex_\beta(S_n/n)\le \epsilon\right\} \to 0$$
as $n \to \infty$.
Hence, we have $\mathrm{DQ}^\ex_\alpha(\mathbf X_n)=\alpha^*/\alpha \to 0$ as $n \to \infty$.
 \end{proof}

\begin{proof}[Proof of Proposition \ref{quasi-convex}]
First, consider the case   $\mathbf w=\mathbf 0$. For $\mathbf v\in \R_+^n$,  using scale invariance,  we have
\begin{align*}
 \mathrm{DQ}_\alpha^{\ex}((\lambda\mathbf 0+(1-\lambda)\mathbf v )\odot \mathbf X)=\mathrm{DQ}_\alpha^{\ex}((1-\lambda)\mathbf v \odot \mathbf X)=\mathrm{DQ}_\alpha^{\ex}(\mathbf v \odot \mathbf X).
\end{align*}
Since $\DQex(\mathbf 0 \odot \mathbf X)=0$ and $\DQex$ is nonnegative, we have 
$\max\{\DQex(\mathbf 0 \odot \mathbf X), \DQex(\mathbf v \odot \mathbf X)\}=\DQex(\mathbf v \odot \mathbf X)$.

Next, let $\mathbf w=(w_1, \dots, w_n), \mathbf v=(v_1, \dots, v_n) \in \R_+^n\setminus\{\mathbf 0\}$. Define
$w=\sum_{i=1}^n w_i>0$, $v=\sum_{i=1}^n v_i>0$, $\mathbf w'=\mathbf w/w$, and $\mathbf v'=\mathbf v/v$. It is clear that  $\mathbf w'\in \Delta_n$ and $\mathbf v'\in \Delta_n$.
For any $\lambda \in [0,1]$, by   scale invariance (SI) of $\DQex$ and its quasi-convexity  (QC) on $\Delta_n$, we have
\begin{align*}
 &\mathrm{DQ}_\alpha^{\ex}((\lambda\mathbf w+(1-\lambda)\mathbf v )\odot \mathbf X)\\
 &= \mathrm{DQ}_\alpha^{\ex}\left((\lambda w+(1-\lambda)v)\left(\frac{\lambda w}{\lambda w+(1-\lambda)v}\mathbf w'+\frac{(1-\lambda)v}{\lambda w+(1-\lambda)v}\mathbf v' \right)\odot \mathbf X\right)\\
\mbox{\footnotesize{(by SI)}~~~} &= \mathrm{DQ}_\alpha^{\ex}\left(\left(\frac{\lambda w}{\lambda w+(1-\lambda)v}\mathbf w'+\frac{(1-\lambda)v}{\lambda w+(1-\lambda)v}\mathbf v' \right)\odot \mathbf X\right)\\
 \mbox{\footnotesize{(by QC)}~~~}& \le \max\left\{\mathrm{DQ}_\alpha^{\ex}(\mathbf w' \odot \mathbf X), \mathrm{DQ}_\alpha^{\ex}(\mathbf v' \odot \mathbf X) \right\}\\
\mbox{\footnotesize{(by SI)}~~~} & = \max\left\{\mathrm{DQ}_\alpha^{\ex}(w\mathbf w' \odot \mathbf X), \mathrm{DQ}_\alpha^{\ex}(v\mathbf v' \odot \mathbf X) \right\} \\&= \max\left\{\mathrm{DQ}_\alpha^{\ex}(\mathbf w \odot \mathbf X), \mathrm{DQ}_\alpha^{\ex}(\mathbf v \odot \mathbf X) \right\}.
\end{align*}
Therefore,  the mapping $\mathbf w \mapsto \mathrm{DQ}_\alpha^{\ex}$ is quasi-convex on $\R^n_+$.  
\end{proof}
\begin{proof}[Proof of Theorem \ref{prop:pseudo}]
We assume $\mathbf X$ is not a constant vector; otherwise $f(\mathbf w)=0$ and there is nothing to show. 
By Theorem \ref{th:var}, we have   
\begin{align}\label{eq:f}
f(\mathbf w )&= \frac{\mathbb{E}[(\mathbf  w^\top(\mathbf X- \mathbf{x}_{\alpha}^{\ex} ))_+]}{\alpha(2\mathbb{E}[(\mathbf  w^\top(\mathbf X- \mathbf{x}_{\alpha}^{\ex} ))_+]-\mathbb{E}[\mathbf  w^\top(\mathbf X- \mathbf{x}_{\alpha}^{\ex} )])},
\end{align} 
and note that the denominator in the right-hand side of \eqref{eq:f} is positive.
We aim to show for all $\mathbf w=(w_1, \dots, w_n), \mathbf v=(v_1, \dots, v_n) \in (0, +\infty)^n$, if  $f(\mathbf v) <f(\mathbf w )$, then   $\nabla f(\mathbf w) \cdot(\mathbf v-\mathbf w) < 0$. 
We can compute  
$$\begin{aligned}
&\frac{\partial f(\mathbf w )}{\partial w_{i}}
\\&= \frac{\E[  (X_i-\ex_\alpha(X_i))\id_{\{\mathbf w^\top (\mathbf X-\mathbf x^{\ex}_\alpha)>0\}}](2\mathbb{E}[(\mathbf  w^\top(\mathbf X- \mathbf{x}_{\alpha}^{\ex} ))_+]-\mathbb{E}[\mathbf  w^\top(\mathbf X- \mathbf{x}_{\alpha}^{\ex} )])}{ \alpha(2\mathbb{E}[(\mathbf  w^\top(\mathbf X- \mathbf{x}_{\alpha}^{\ex} ))_+]-\mathbb{E}[\mathbf  w^\top(\mathbf X- \mathbf{x}_{\alpha}^{\ex} )])^2}
\\&~~~~-\frac{\mathbb{E}[(\mathbf  w^\top(\mathbf X- \mathbf{x}_{\alpha}^{\ex} ))_+] (2  \E[ (X_i-\ex_\alpha(X_i))\id_{\{\mathbf w^\top (\mathbf X-\mathbf x^{\ex}_\alpha)>0\}}] -\mathbb{E}[(X_i- \mathbf{x}_{\alpha}^{\ex}(X_i) )])}{ \alpha (2\mathbb{E}[(\mathbf  w^\top(\mathbf X- \mathbf{x}_{\alpha}^{\ex} ))_+]-\mathbb{E}[\mathbf  w^\top(\mathbf X- \mathbf{x}_{\alpha}^{\ex} )])^2  }\\&=\frac{\mathbb{E}[(\mathbf  w^\top(\mathbf X- \mathbf{x}_{\alpha}^{\ex} ))_+] \mathbb{E}[\mathbf  ( X_i- \mathbf{x}_{\alpha}^{\ex}(X_i) )]  -\E[ (X_i-\ex_\alpha(X_i))\id_{\{\mathbf w^\top (\mathbf X-\mathbf x^{\ex}_\alpha)>0\}}] \mathbb{E}[\mathbf  w^\top(\mathbf X- \mathbf{x}_{\alpha}^{\ex} )]}{ \alpha (2\mathbb{E}[(\mathbf  w^\top(\mathbf X- \mathbf{x}_{\alpha}^{\ex} ))_+]-\mathbb{E}[\mathbf  w^\top(\mathbf X- \mathbf{x}_{\alpha}^{\ex} )])^2 }.
\end{aligned}$$ 
It is easy to check that
$\nabla f(\mathbf w) \cdot\mathbf w=0$.
If   $f(\mathbf v) < f(\mathbf w)$,  by \eqref{eq:f}, we have  
\begin{equation}
    \label{eq:str-1}\E[(\mathbf  v^\top(\mathbf X- \mathbf{x}_{\alpha}^{\ex} ))]\mathbb{E}[(\mathbf  w^\top(\mathbf X- \mathbf{x}_{\alpha}^{\ex} )_+)]- \mathbb{E}[\mathbf  w^\top(\mathbf X- \mathbf{x}_{\alpha}^{\ex} )] \E[(\mathbf  v^\top(\mathbf X- \mathbf{x}_{\alpha}^{\ex} ))_+]< 0.
\end{equation}   
Since $\ex_\alpha(X_i)>\E[X_i]$ for $\alpha \in (0,1/2)$, we have $\E\left[\mathbf  w^\top(\mathbf X- \mathbf{x}_{\alpha}^{\ex} )\right]<0$. Moreover,
it is clear that 
\begin{equation}
    \label{eq:str-2}\E[(\mathbf v^\top (\mathbf X-\mathbf x_\alpha^\ex))_+]\ge\E[(\mathbf v^\top (\mathbf X-\mathbf x_\alpha^\ex))\id_{\{\mathbf w^\top(\mathbf X-\mathbf x_{\alpha}^\ex)>0\}}].
    \end{equation}
Hence, using \eqref{eq:str-1} and \eqref{eq:str-2},
\begin{align*}
&\nabla f(\mathbf w) \cdot(\mathbf v-\mathbf w)\\
&= \nabla f(\mathbf w) \mathbf v - 0\\
&= 
\frac{ \mathbb{E}[\mathbf  v^\top(\mathbf X- \mathbf{x}_{\alpha}^{\ex} )] \mathbb{E}[(\mathbf  w^\top(\mathbf X- \mathbf{x}_{\alpha}^{\ex} ))_+] - \mathbb{E}[\mathbf  w^\top(\mathbf X- \mathbf{x}_{\alpha}^{\ex} )] \E[\mathbf  v^\top(\mathbf X- \mathbf{x}_{\alpha}^{\ex} )\id_{\{\mathbf w^\top (\mathbf X-\mathbf x^{\ex}_\alpha)>0\}}]}{\alpha (2\mathbb{E}[(\mathbf  w^\top(\mathbf X- \mathbf{x}_{\alpha}^{\ex} ))_+]-\mathbb{E}[\mathbf  w^\top(\mathbf X- \mathbf{x}_{\alpha}^{\ex} )])^2 }\\
&\le 
\frac{ \mathbb{E}[\mathbf  v^\top(\mathbf X- \mathbf{x}_{\alpha}^{\ex} )] \mathbb{E}[(\mathbf  w^\top(\mathbf X- \mathbf{x}_{\alpha}^{\ex} ))_+] - \mathbb{E}[\mathbf  w^\top(\mathbf X- \mathbf{x}_{\alpha}^{\ex} )] \E[\mathbf  v^\top(\mathbf X- \mathbf{x}_{\alpha}^{\ex} )_+]}{\alpha (2\mathbb{E}[(\mathbf  w^\top(\mathbf X- \mathbf{x}_{\alpha}^{\ex} ))_+]-\mathbb{E}[\mathbf  w^\top(\mathbf X- \mathbf{x}_{\alpha}^{\ex} )])^2 }<0
\end{align*}
which implies that   $f$ is  pseudo-convex on $(0, \infty)^n$.  
\end{proof}
\begin{proof}[Proof of Proposition \ref{thm:2}] 
By Theorem \ref{th:var}, we have 
\begin{align*} 
{\rm DQ}^{\ex}_{\alpha}(\mathbf w \odot \mathbf X)
=\frac{1}{\alpha}\frac{\mathbb{E}[(\mathbf  w^\top(\mathbf X- \mathbf{x}_{\alpha}^{\ex} ))_+]}{2\mathbb{E}\left[\left(\mathbf  w^\top(\mathbf X- \mathbf{x}_{\alpha}^{\ex} )\right)_+\right]-\mathbb{E}\left[\mathbf  w^\top(\mathbf X- \mathbf{x}_{\alpha}^{\ex} )\right]}.\end{align*} 
Since $\E[X_i-\ex_\alpha(X_i)]<0$ for   $i\in [n]$ and $\alpha\in(0,1/2)$,  we have $\mathbb{E}\left[\mathbf  w^\top(\mathbf{x}_{\alpha}^{\ex}-\mathbf X )\right] > 0$ for any  $\mathbf{w} \in \Delta_n$. 
If there exists $\mathbf  w_0\in\Delta_n$ such that $\mathbb{E}\left[\left(\mathbf  w_0^\top(\mathbf X- \mathbf{x}_{\alpha}^{\ex} )\right)_+\right]=0$, then $$\min _{\mathbf{w} \in \Delta_n}  {\rm DQ}^{\ex}_{\alpha}(\mathbf w \odot \mathbf X)=\min_{\mathbf w \in \Delta_n}~\frac{\mathbb{E}[(\mathbf  w^\top(\mathbf X- \mathbf{x}_{\alpha}^{\ex} ))_+]}{\mathbb{E}\left[ \mathbf  w^\top (\mathbf{x}_{\alpha}^{\ex} - \mathbf X)\right]}= 0.$$   Otherwise, we have 
$$ {\rm DQ}^{\ex}_{\alpha}(\mathbf w \odot \mathbf X) =\frac{1}{\alpha}\frac{1}{2-\frac{\mathbb{E}\left[\mathbf  w^\top(\mathbf X- \mathbf{x}_{\alpha}^{\ex} )\right]}{\mathbb{E}\left[\left(\mathbf  w^\top(\mathbf X- \mathbf{x}_{\alpha}^{\ex} )\right)_+\right]}}.$$ 
Since $\mathbb{E}\left[\mathbf  w^\top(\mathbf{x}_{\alpha}^{\ex}-\mathbf X )\right] > 0$ for any  $\mathbf{w} \in \Delta_n$, minimizing $\mathrm{DQ}^{\ex}_\alpha (\mathbf w \odot\mathbf{X})$ is equivalent to solving  \eqref{optimal}. 
Let $m>0$. It is clear that  minimizing \eqref{optimal}, subject to  $\mathbb{E}\left[\mathbf  w^\top(\mathbf X- \mathbf{x}_{\alpha}^{\ex} )\right]=m$, is equivalent to  minimize $\mathbb{E}\left[\left(\mathbf  w^\top(\mathbf X- \mathbf{x}_{\alpha}^{\ex} )\right)_+\right]/m$ over $\mathbf{w} \in \Delta_n$, subject to  $\mathbb{E}\left[\left(\mathbf  w^\top(\mathbf X- \mathbf{x}_{\alpha}^{\ex} )\right)\right]=m$. Hence, by varing $m \in \R_+$, we get \eqref{eq:opt3}.
 \end{proof}

\begin{proof}[Proof of Proposition \ref{prop:ellip}] Since  $\mathbf{X} \sim \mathrm{E}_{n}(\boldsymbol{\mu}, \Sigma, \tau)$,  the linear  structure of ellipitical distributions gives    $\sum_{i=1}^n X_i \sim  \mathrm{E}_{1}( \mathbf 1^\top \boldsymbol{\mu}, \mathbf 1^\top\Sigma \mathbf 1, \tau)$.  That is, $\sum_{i=1}^n X_i \laweq \sum_{i=1}^n \mu_i+\Vert \mathbf{1}^\top A\Vert_2 Y$, where $A$ is the Cholesky decomposition of $\Sigma$, and  $\Vert \cdot \Vert_2$ denotes the Euclidean norm.
		Also,  we have $ \ex_\alpha(X_i)=\mu_i+\Vert  \mathbf{e}^\top_iA\Vert_2 \ex_\alpha (Y).$
Note that 
 \begin{align*}
\mathrm{DQ}^{\ex}_\alpha (\mathbf{X})& = \frac{1}{\alpha} \inf\left\{\beta \in (0,1) :  \ex_{\beta} \left( \Vert \mathbf{1}^\top A\Vert_2Y- \sum_{i=1}^n\Vert  \mathbf{e}^\top_iA\Vert_2  \ex_\alpha(Y)  \right) \le0 \right\}\\&
=\frac{1}{\alpha}\inf \left\{\beta\in (0,1): Y-k_\Sigma \ex_\alpha(Y)    \in \mathcal{A}_{\ex_{\beta}}\right\}. 
\end{align*} 
Following a similar approach as in the proof of Theorem \ref{th:var}, we obtain \eqref{DQ_elli}. Since $\ex_{\alpha}(Y)>0$  for $\alpha \in (0,1/2)$, 			 $ {\rm DQ}_\alpha^{\ex}(\mathbf X)$ is decreasing  in $k_\Sigma$ for $\alpha \in (0,1/2)$. In particular, if $k_\Sigma=1$, together with  \eqref{trans_p}, we have $ {\rm DQ}_\alpha^{\ES}(\mathbf X)=1$. 
\end{proof}	

\begin{proof}[Proof of Proposition \ref{prop:7}] Since  $\mathbf{X} \sim \mathrm{E}_{n}(\boldsymbol{\mu}, \Sigma, \tau)$,  the linear  structure of ellipitical distributions gives    $\sum_{i=1}^n w_i X_i \sim  \mathrm{E}_{1}( \mathbf w^\top \boldsymbol{\mu}, \mathbf w^\top\Sigma \mathbf w, \tau)$.  That is, $\sum_{i=1}^n w_iX_i \laweq \sum_{i=1}^n w_i\mu_i+\Vert \mathbf{w}^\top A\Vert_2 Y$, where $A$ is the Cholesky decomposition of $\Sigma$. Moreover,  $ \ex_\alpha(w_iX_i)=w_i\mu_i+\Vert w_i \mathbf{e}^\top_iA\Vert_2 \ex_\alpha (Y).$ Hence, we have
	\begin{align*}
\mathrm{DQ}^{\ex}_\alpha (\mathbf w \odot \mathbf X)& =\frac{\mathbb{E}\left[\left(Y-k_{\mathbf w \Sigma \mathbf w^\top}\ex_\alpha(Y)\right)_+\right]}{\alpha \mathbb{E}\left[\left|Y-k_{\mathbf w \Sigma \mathbf w^\top}\ex_\alpha(Y)\right|\right]}=\frac{1}{\alpha\left (1+\frac{ \mathbb{E}\left[\left(Y-k_{\mathbf w \Sigma \mathbf w^\top}\ex_\alpha(Y)\right)_-\right]}{\mathbb{E}[(Y-k_{\mathbf w \Sigma \mathbf w^\top}\ex_\alpha(Y))_+]}\right)}.
\end{align*} 	
Since $\ex_\alpha(Y)>0$, it is clear that $k_{\mathbf w \Sigma \mathbf w^\top} \mapsto {\rm DQ}_\alpha^{\ex}(\mathbf w \odot \mathbf X)$ is decreasing for $\alpha \in(0,1/2)$, thus minimizing  $\mathrm{DQ}^{\ex}_\alpha (\mathbf w \odot \mathbf X)$ is equivalent to maximizing $k_{\mathbf w \Sigma \mathbf w^\top}$.
\end{proof}		

\begin{proof}[Proof of Proposition \ref{prop:lim}]	If  $\mathbf{X} \in \mathrm{MRV}_{\gamma}(\Psi)$ with $\gamma>0$, we have (Lemma  2.2 of \cite{ME13})
		$$
		\lim _{\alpha \downarrow 0}\frac{ \VaR_{\alpha} \left(\sum_{i=1}^n X_i\right) }{ \VaR_{\alpha} \left(\|\mathbf{X}\|\right)} =  \eta_{\mathbf{1}}^{1 / \gamma},
		$$
		and
		$$
		\lim _{\alpha \downarrow 0}\sum_{i=1}^n \frac{ \VaR_{\alpha} \left(X_i\right) }{ \VaR_{\alpha} \left(\|\mathbf{X}\|\right)} = \sum_{i=1}^n\eta_{\mathbf{e}_i}^{1 / \gamma}.
		$$	
Since $\ex_\alpha$ is continuous, strictly decreasing in $\alpha$,   we have $\ex_{\alpha^*}(\sum_{i=1}^n X_i)=\sum_{i=1}^n \ex_{\alpha} (X_i)$.  Additionally,  as $\sum_{i=1}^n X_i\in \mathrm{RV}_\gamma$ and $\ex_\alpha(\sum_{i=1}^n X_i)/\VaR_\alpha (\sum_{i=1}^n X_i)\to (\gamma-1)^{-1/\gamma}$ as $\alpha \downarrow 0$ for $X\in \mathrm{RV}_\gamma$ with $\gamma>1$,  it follows that
$$\begin{aligned}
		&\lim _{\alpha \downarrow 0}\frac{ \ex_{\alpha} \left(\sum_{i=1}^n X_i\right) }{\ex_{\alpha^*}(\sum_{i=1}^n X_i)}  = \lim _{\alpha \downarrow 0}\frac{ \ex_{\alpha} (\sum_{i=1}^n  X_i )}{\sum_{i=1}^n   \ex_{\alpha} ( X_i ) }  = \lim _{\alpha \downarrow 0} \frac{ \VaR_{\alpha} (\sum_{i=1}^n  X_i )}{\sum_{i=1}^n   \VaR_{\alpha} ( X_i ) }=  \frac{\eta_{\mathbf{1}}^{1 / \gamma}}{\sum_{i=1}^{n} \eta_{\mathbf{e}_{i}}^{1 / \gamma}}.
			\end{aligned}$$ 
	Let $c=\alpha^*/\alpha$.  Since $\sum_{i=1}^n X_i\in {\rm RV}_\gamma$, for $c>0$, we have 
		$$\frac{ \ex_{\alpha} \left(\sum_{i=1}^n X_i\right) }{\ex_{c\alpha} \left(\sum_{i=1}^n  X_i\right) }= \frac{ \VaR_{\alpha} \left(\sum_{i=1}^n X_i\right) }{\VaR_{c\alpha} \left(\sum_{i=1}^n  X_i\right) }\simeq \left(\frac{1 }{c
		}\right)^{-1/\gamma}
		\mbox{~~~~as }  \alpha \downarrow 0.$$
	Hence,
		$$
		{\rm DQ}^{\ex}_{\alpha}(\mathbf w \odot \mathbf X) = \frac{\alpha^*}{\alpha}=c\to \frac{\eta_{\mathbf{1}}}{\left(\sum_{i=1}^{n}  \eta_{\mathbf{e}_{i}}^{1 / \gamma}\right)^\gamma}.
		$$

Furthermore, if $X_1, \dots, X_n$ are  nonnegative random variables, by Example 3.1 of \cite{ELW09}, we have 
 $$\left(\frac{\eta_{\mathbf 1}}{\eta_{\mathbf e_i}}\right)^{1/\gamma}=  \lim_{\alpha\downarrow 0}\frac{\ex_\alpha\left(\sum_{i=1}^n X_i\right)}{\ex_\alpha(X_1)}=\lim_{\alpha\downarrow 0}\frac{\VaR_\alpha\left(\sum_{i=1}^n X_i\right)}{\VaR_\alpha(X_1)}=n^{1/\gamma},~~i\in [n]$$ which implies that $\eta_{\mathbf 1}/\eta_{\mathbf e_i} =n$ for $i\in [n]$.  Hence, $\lim _{\alpha \downarrow 0} {\rm DQ}^{\ex}_{\alpha}(\mathbf X)=n^{1-\gamma}$. Moreover, if $\gamma\downarrow 1,$ then $ {\rm DQ}^{\ex}_{\alpha}(\mathbf X)\to 1$.
\end{proof}

\end{document}